\let\saveamalg\amalg
\let\amalg\relax
\let\amalg\saveamalg
\newcommand{\fix}{\operatorname{fix}}
\newcommand{\cat}[1]{\mathbf{#1}}
\newcommand{\Dcpo}{\mathbf{Dcpo}}
\newcommand{\opp}[1]{#1^\mathbf{op}}
\newcommand{\id}{\operatorname{id}}
\newcommand{\hide}[1]{}
\newdimen\proofrulebreadth \proofrulebreadth=.05em
\newdimen\proofdotseparation \proofdotseparation=1.25ex
\newdimen\proofrulebaseline \proofrulebaseline=2ex
\let\then\relax
\def\hfi{\hskip0pt plus.0001fil}
\mathchardef\squigto="3A3B
\newif\ifinsideprooftree\insideprooftreefalse
\newif\ifonleftofproofrule\onleftofproofrulefalse
\newif\ifproofdots\proofdotsfalse
\newif\ifdoubleproof\doubleprooffalse
\let\wereinproofbit\relax
\newdimen\shortenproofleft
\newdimen\shortenproofright
\newdimen\proofbelowshift
\newbox\proofabove
\newbox\proofbelow
\newbox\proofrulename
\def\shiftproofbelow{\let\next\relax\afterassignment\setshiftproofbelow\dimen0 }
\def\shiftproofbelowneg{\def\next{\multiply\dimen0 by-1 }%
\afterassignment\setshiftproofbelow\dimen0 }
\def\setshiftproofbelow{\next\proofbelowshift=\dimen0 }
\def\setproofrulebreadth{\proofrulebreadth}
\def\prooftree{
%
\ifnum  \lastpenalty=1
\then   \unpenalty
\else   \onleftofproofrulefalse
\fi
%
\ifonleftofproofrule
\else   \ifinsideprooftree
        \then   \hskip.5em plus1fil
        \fi
\fi
%
\bgroup
\setbox\proofbelow=\hbox{}\setbox\proofrulename=\hbox{}%
\let\justifies\proofover\let\leadsto\proofoverdots\let\Justifies\proofoverdbl
\let\using\proofusing\let\[\prooftree
\ifinsideprooftree\let\]\endprooftree\fi
\proofdotsfalse\doubleprooffalse
\let\thickness\setproofrulebreadth
\let\shiftright\shiftproofbelow \let\shift\shiftproofbelow
\let\shiftleft\shiftproofbelowneg
\let\ifwasinsideprooftree\ifinsideprooftree
\insideprooftreetrue
%
\setbox\proofabove=\hbox\bgroup$\displaystyle 
\let\wereinproofbit\prooftree
%
\shortenproofleft=0pt \shortenproofright=0pt \proofbelowshift=0pt
%
\onleftofproofruletrue\penalty1
}
\def\eproofbit{
%
\ifx    \wereinproofbit\prooftree
\then   \ifcase \lastpenalty
        \then   \shortenproofright=0pt  
        \or     \unpenalty\hfil         
        \or     \unpenalty\unskip       
        \else   \shortenproofright=0pt  
        \fi
\fi
%
\global\dimen0=\shortenproofleft
\global\dimen1=\shortenproofright
\global\dimen2=\proofrulebreadth
\global\dimen3=\proofbelowshift
\global\dimen4=\proofdotseparation
\global\count255=\proofdotnumber
%
$\egroup  
%
\shortenproofleft=\dimen0
\shortenproofright=\dimen1
\proofrulebreadth=\dimen2
\proofbelowshift=\dimen3
\proofdotseparation=\dimen4
\proofdotnumber=\count255
}
\def\proofover{
\eproofbit 
\setbox\proofbelow=\hbox\bgroup 
\let\wereinproofbit\proofover
$\displaystyle
}%
\def\proofoverdbl{
\eproofbit 
\doubleprooftrue
\setbox\proofbelow=\hbox\bgroup 
\let\wereinproofbit\proofoverdbl
$\displaystyle
}%
\def\proofoverdots{
\eproofbit 
\proofdotstrue
\setbox\proofbelow=\hbox\bgroup 
\let\wereinproofbit\proofoverdots
$\displaystyle
}%
\def\proofusing{
\eproofbit 
\setbox\proofrulename=\hbox\bgroup 
\let\wereinproofbit\proofusing
\kern0.3em$
}
\def\endprooftree{
\eproofbit 
  \dimen5 =0pt
%
\dimen0=\wd\proofabove \advance\dimen0-\shortenproofleft
\advance\dimen0-\shortenproofright
%
\dimen1=.5\dimen0 \advance\dimen1-.5\wd\proofbelow
\dimen4=\dimen1
\advance\dimen1\proofbelowshift \advance\dimen4-\proofbelowshift
%
\ifdim  \dimen1<0pt
\then   \advance\shortenproofleft\dimen1
        \advance\dimen0-\dimen1
        \dimen1=0pt
        \ifdim  \shortenproofleft<0pt
        \then   \setbox\proofabove=\hbox{%
                        \kern-\shortenproofleft\unhbox\proofabove}%
                \shortenproofleft=0pt
        \fi
\fi
%
\ifdim  \dimen4<0pt
\then   \advance\shortenproofright\dimen4
        \advance\dimen0-\dimen4
        \dimen4=0pt
\fi
%
\ifdim  \shortenproofright<\wd\proofrulename
\then   \shortenproofright=\wd\proofrulename
\fi
%
\dimen2=\shortenproofleft \advance\dimen2 by\dimen1
\dimen3=\shortenproofright\advance\dimen3 by\dimen4
%
\ifproofdots
\then
        \dimen6=\shortenproofleft \advance\dimen6 .5\dimen0
        \setbox1=\vbox to\proofdotseparation{\vss\hbox{$\cdot$}\vss}%
        \setbox0=\hbox{%
                \advance\dimen6-.5\wd1
                \kern\dimen6
                $\vcenter to\proofdotnumber\proofdotseparation
                        {\leaders\box1\vfill}$%
                \unhbox\proofrulename}%
\else   \dimen6=\fontdimen22\the\textfont2 
        \dimen7=\dimen6
        \advance\dimen6by.5\proofrulebreadth
        \advance\dimen7by-.5\proofrulebreadth
        \setbox0=\hbox{%
                \kern\shortenproofleft
                \ifdoubleproof
                \then   \hbox to\dimen0{%
                        $\mathsurround0pt\mathord=\mkern-6mu%
                        \cleaders\hbox{$\mkern-2mu=\mkern-2mu$}\hfill
                        \mkern-6mu\mathord=$}%
                \else   \vrule height\dimen6 depth-\dimen7 width\dimen0
                \fi
                \unhbox\proofrulename}%
        \ht0=\dimen6 \dp0=-\dimen7
\fi
%
\let\doll\relax
\ifwasinsideprooftree
\then   \let\VBOX\vbox
\else   \ifmmode\else$\let\doll=$\fi
        \let\VBOX\vcenter
\fi
\VBOX   {\baselineskip\proofrulebaseline \lineskip.2ex
        \expandafter\lineskiplimit\ifproofdots0ex\else-0.6ex\fi
        \hbox   spread\dimen5   {\hfi\unhbox\proofabove\hfi}%
        \hbox{\box0}%
        \hbox   {\kern\dimen2 \box\proofbelow}}\doll%
%
\global\dimen2=\dimen2
\global\dimen3=\dimen3
\egroup 
\ifonleftofproofrule
\then   \shortenproofleft=\dimen2
\fi
\shortenproofright=\dimen3
%
\onleftofproofrulefalse
\ifinsideprooftree
\then   \hskip.5em plus 1fil \penalty2
\fi
}
\newtheorem{theorem}{Theorem}[section]
\newcommand{\kindoftheorem}{Theorem}
\newtheorem{proposition}[theorem]{Proposition}
\newtheorem{corollary}[theorem]{Corollary}
\newtheorem{lemma}[theorem]{Lemma}
\theoremstyle{definition}
\newtheorem{definition}[theorem]{Definition}
\newtheorem{assumption}[theorem]{Assumption}
\theoremstyle{remark}
\DeclareRobustCommand{\amalg}{\mathop{\text{\fakecoprod}}}
\newcommand{\fakecoprod}{%
  \sbox0{$\prod$}%
  \smash{\raisebox{\dimexpr.9625\depth-\dp0}{\scalebox{1}[-1]{$\prod$}}}%
  \vphantom{$\prod$}%
}
\tikzset{node distance=15mm, auto}
\tikzstyle{every node}=[font=\footnotesize]
\tikzstyle{morphism}=[font=\scriptsize]
\DeclareMathOperator{\Total}{Total}
\DeclareMathOperator{\Split}{Split}
\DeclareMathOperator{\Par}{Par}
\DeclareMathOperator{\Inv}{Inv}
\DeclareMathOperator{\Hom}{Hom}
\newcommand{\ridm}{\overline} 
\newcommand{\ie}{\emph{i.e.}}
\newcommand{\eg}{\emph{e.g.}}
\newcommand{\parcat}[1]{\Par(\overline{\cat{#1}},\widehat{\mathcal{M}_{\text{gap}}})}
\newcommand{\Set}{\cat{Set}}
\newcommand{\bigjoin}{\bigvee}
\newcommand{\dayconv}{\circledast}
\newcommand{\Ess}{\mathcal{S}}
\newcommand{\tot}{\xrightarrow}
\newcommand{\fold}{\mathrm{fold}}
\newcommand{\unfold}{\mathrm{unfold}}
\newcommand{\sem}[1]{\ensuremath{\llbracket #1 \rrbracket}}
\title{Join Inverse Rig Categories for Reversible Functional Programming, and Beyond}
\author{
Robin Kaarsgaard
\institute{University of Edinburgh \\ Edinburgh, United Kingdom}\\
\email{robin.kaarsgaard@ed.ac.uk}
\and
Mathys Rennela
\institute{INRIA\\ Paris, France}\\
\email{mathys.rennela@inria.fr}
}
\begin{document}

\maketitle

\begin{abstract}
Reversible computing is a computational paradigm in which computations are deterministic in both the forward and backward direction, so that programs have well-defined forward \emph{and} backward semantics. We investigate the formal semantics of the reversible functional programming language Rfun.  For this purpose we introduce join inverse rig categories, the natural marriage of join inverse categories and rig categories, which we show can be used to model the language Rfun, under reasonable assumptions. These categories turn out to be a particularly natural fit for reversible computing as a whole, as they encompass models for other reversible programming languages, notably Theseus and reversible flowcharts. This suggests that join inverse rig categories really are the categorical models of reversible computing.

\end{abstract}

\section{Introduction}
\label{sec:intro}

Since the early days of theoretical computer science, the quest for the mathematical description of (functional) programming languages has led to a substantial body of work.
In reversible computing, every program is \emph{reversible}, i.e., both \emph{forward} and \emph{backward} deterministic. But why study such a peculiar paradigm of computation at all?
While the daily operations of our computers are irreversible, the physical devices which execute them are fundamentally reversible. In the paradigm of quantum computation, the physical operations performed by a scalable quantum computer intrinsically rely on quantum mechanics, which is reversible. Landauer \cite{landauer} has 
famously argued, through what has later been coined \emph{Landauer's
principle}, that the erasure of a bit of information is inexorably linked to
the dissipation of energy as heat (which has since seen both formal~\cite{abramsky-horsman} and experimental~\cite{berut} verification). On its own, this constitutes a reasonable
argument for the study of reversible computing, as this model of computation sidesteps this otherwise inevitable energy dissipation by avoiding the erasure of information altogether.

And although studying reversibility can be motivated by issues raised by the laws of thermodynamics which arguably constitute a theoretical limit of Moore's law \cite{moore}, reversibility arises not only in quantum computing (see e.g., \cite{altenkirch-grattage}), but also has its own circuit models~\cite{toffoli,fredkin}, Turing machines~\cite{bennett,axelsen} and automata~\cite{kutrib1,kutrib2}. Moreover, the notion of reversibility has seen applications in areas spanning from high-performance computing~\cite{schordan} to process calculi \cite{cristescu-krivine-varacca} and robotics \cite{schultz-bordignon-stoy,schultz-laursen-ellekilde-axelsen}, to name a few.

There is an increasing interest for the theoretical study of the reversible computing paradigm (see, e.g., \cite{aman-foundations}). The present work provides a building block in the study of reversible computing through the lens of category theory.

\subsection{Reversible programming primer} 
\label{sec:reversible_programming_primer}
Almost all programming languages in use today (with some notable exceptions)
guarantee that programs are \emph{forward deterministic} (or simply
\emph{deterministic}), in the sense that any current computation state uniquely
determines the next computation state. Few such languages, however, guarantee
that programs are \emph{backward deterministic}, i.e.~that any current
computation state uniquely determines the \emph{previous} computation state.
For example, assigning a constant value to a variable in an imperative
programming language is forward deterministic, but not backward deterministic
(as one generally has no way of determining the value stored in the variable
prior to this assignment).\\

\tikzset{cross/.style={cross out, draw, 
         minimum size=2*(#1-\pgflinewidth), 
         inner sep=0pt, outer sep=0pt}}
\noindent 
\begin{minipage}{.24\textwidth}
\centering
\textit{Forward non-determinism}\\
\ \\
\begin{tikzpicture}
\node[circle,fill=gray] (current) {};
\node (phantom) [right=10mm of current] {};
\node[circle,fill=black] (next) [above=3mm of phantom]{};
\node[circle,fill=black] (other) [below=3mm of phantom] {};

\draw[->,thick] (current) to node [morphism] {} (next);
\draw[->,thick] (current) to node [morphism] {} (other);
\end{tikzpicture}
\end{minipage}
\hfill
\begin{minipage}{.24\textwidth}
\centering
\textit{Forward\\ determinism}\\
\ \\
\begin{tikzpicture}
\node[circle,fill=gray] (current) {};
\node (phantom) [right=10mm of current] {};
\node[circle,fill=black] (next) [above=3mm of phantom]{};
\node[circle,fill=black] (other) [below=3mm of phantom] {};

\draw[->,thick] (current) to node [] {} (next);
\draw[->,thick] (current) to node [anchor=center, midway, red] {\huge $\mathbf{\bigtimes}$} (other);
\end{tikzpicture}
\end{minipage}
\hfill
\begin{minipage}{.24\textwidth}
\centering
\textit{Backward non-determinism}\\
\ \\
\begin{tikzpicture}
\node (phantom) {};
\node[circle,fill=gray] (current) [right=10mm of phantom] {};
\node[circle,fill=gray!35] (previous) [above=3mm of phantom]{};
\node[circle,thick,fill=gray!35] (other) [below=3mm of phantom] {};

\draw[->,thick] (previous) to node [morphism] {} (current);
\draw[->,thick] (other) to node [] {} (current);
\end{tikzpicture}
\end{minipage}
\hfill
\begin{minipage}{.24\textwidth}
\centering
\textit{Backward determinism}\\
\ \\
\begin{tikzpicture}
\node (phantom) {};
\node[circle,fill=gray] (current) [right=10mm of phantom] {};
\node[circle,fill=gray!35] (previous) [above=3mm of phantom]{};
\node[circle,thick,fill=gray!35] (other) [below=3mm of phantom] {};

\draw[->,thick] (previous) to node [morphism] {} (current);
\draw[->,thick] (other) to node [anchor=center, midway, red] {\huge $\mathbf{\bigtimes}$} (current);
\end{tikzpicture}
\end{minipage}
\ \\

Programming languages which guarantee both forward and backward determinism of
programs are called \emph{reversible}. Though there are numerous examples of
such reversible programming languages (see, \eg, \cite{theseus,janus}), we
focus here on the reversible functional programming language
Rfun~\cite{yokoyama-axelsen-glueck-rc2011}.

\begin{wrapfigure}{r}{0.25\linewidth}
\input{rfun_fib.tex}
\input{rfun_fib_inv.tex}
\end{wrapfigure}

Rfun is an untyped reversible functional programming language (see an example program, computing Fibonacci pairs, to the right, due to \cite{yokoyama-axelsen-glueck-rc2011}) similar in style
to the Lisp family of programming languages. As a consequence of reversibility,
all functions in Rfun are \emph{partial injections}, \ie, whenever some
function $f$ is defined at points $x$ and $y$, $f(x) = f(y)$ implies $x=y$.
Being untyped, values in Rfun come in the form of Lisp-style symbols and
constructors. 

Pattern matching and variable binding is supported by means of
(slightly restricted) forms of case expressions and let bindings, and iteration
by means of general recursion. These restriction ensure, essentially, that the inverse of a case expression is also a case expression.

In order to guarantee backward determinism (and consequently reversibility),
Rfun imposes a few restrictions on function definitions not usually present in
irreversible functional programming languages. Firstly, while a given variable
may only appear once in a pattern (as is also the case in, \eg, Haskell and the
ML family), it must also occur exactly once in the body.
Secondly, the result of a function call must be bound in a let binder before
use. Thirdly, the leaf expression of any branch in a case expression must
not match \emph{any} leaf expression in a branch preceding it: This is known as the \emph{symmetric first-match policy}, and guarantees that case-expressions can be evaluated normally (i.e., in prioritised order from top to bottom) without impacting reversibility.

Together, these
three restrictions guarantee reversibility. One might wonder whether these
hinder expressivity too much; fortunately, this is not so, as Rfun is r-Turing
complete~\cite{yokoyama-axelsen-glueck-rc2011}, that is: Rfun can simulate any
reversible Turing machine \cite{bennett}.

Another peculiarity of Rfun has to do with duplication of values. Even though
values \emph{can} be (de)duplicated reversibly, the linear use policy on
variables hinders this. To allow for (de)duplication of values, a special
\emph{duplication/equality} operator $\lfloor \cdot \rfloor$ is introduced (see Figure~\ref{fig:dupeq}). Note that this operator can be used both as a value and as a pattern; in the
former use case, it is used to (de)duplicate values, and in the latter, to test
whether values are identical.


As a reversible programming language, Rfun has the property that it is
syntactically closed under inverses. That is to say, if $p$ is an Rfun program,
there exists another Rfun program $p'$ such that $p'$ computes the semantic
inverse of $p$. This is witnessed by a \emph{program
inverter}~\cite{yokoyama-axelsen-glueck-rc2011}. For example, the inverses of the
addition and Fibonacci pair functions shown earlier are given above.

\begin{wrapfigure}{r}{0.5\linewidth}
\begin{align*}
  \lfloor \langle x \rangle \rfloor & = \langle x, x \rangle \\
  \lfloor \langle x, y \rangle \rfloor & = \left\{ \begin{array}{l l}
    \langle x \rangle & \text{if } x = y \\
    \langle x, y \rangle & \text{if } x \neq y
  \end{array} \right.
\end{align*}
\caption{The operator $\lfloor \cdot \rfloor$.}
\label{fig:dupeq}
\end{wrapfigure}

\subsection{Join inverse category theory}

In Section~\ref{sec:join-inv}, we focus on categorical structures in which one can conveniently model the mathematical foundations of reversible computing.
A brainchild of Cockett \& Lack~\cite{cockett1,cockett2,cockett3}, restriction categories
are categories with an abstract notion of partiality, associating to each morphism $f : A \to B$ a ``partial identity'' $\ridm{f} : A \to A$ satisfying $f \circ \ridm{f} = f$ and other axioms. 
In particular, this gives rise to \emph{partial isomorphisms}, which are morphisms $f : A\to B$ associated with partial inverses $f^\dagger:B \to A$ such that $f^\dagger \circ f = \ridm{f}$ and $f \circ f^\dagger = \ridm{f^\dagger}$. In this context, a \textit{total} map is a morphism $f: A\to B$ such that $\ridm{f}$ is the identity on $A$.

A restriction category in which all morphisms are partial isomorphisms is called an \emph{inverse category}. 
Such categories have been considered by the semigroup community for decades~\cite{kastl,lawson,hines-thesis}, though they have more recently been rediscovered in the framework of restriction category theory, and considered as models of reversible computing~\cite{axelsen-kaarsgaard-fossacs16,kaarsgaard-axelsen-glueck-jlamp,giles-thesis,guo-thesis}.
The category $\cat{PInj}$ of sets and partial injections is the canonical example of an inverse category. In fact, every (locally small) inverse
category can be faithfully embedded in it~\cite{heunen,cockett1}.

In line with \cite{axelsen-kaarsgaard-fossacs16}, we focus here on a particular class of domain-theoretic inverse categories, namely join inverse categories. Informally, join inverse categories are inverse categories in which joins (i.e. least upper bounds) of morphisms exist in such a way that the partial identity of a join is the join of the partial identities, among with other coherence axioms.

In this setting, join inverse rig categories are join inverse categories equipped with monoidal products $X \otimes Y$ and monoidal sums $X \oplus Y$ for every pair of objects $X$ and $Y$, together with a isomorphisms which distribute products over sums and annihilate products with the additive unit, subject to preservation of joins and the usual coherence laws of rig categories. Every inverse category embeds in such a join inverse rig category via Wagner-Preston (see also Theorem~\ref{thm:bimonoidal_embedding}).

\subsection{Categorical models of reversible computing}

The present work is predominantly concerned with the axiomatization of categorical models of reversible computing, following similar approaches based on presheaf-theoretic models, in the semantics of quantum computing \cite{malherbe-scott-selinger,rennela-staton-furber-qpl16,rennela-staton-mfps31,ewire} and reversible computing \cite{axelsen-kaarsgaard-fossacs16}.

In order to construct denotational models of the terms of the language Rfun, we adopt the categorical formalism of join inverse rig categories. Since morphisms in inverse categories are all partial isomorphisms, inverse categories have been suggested as models of reversible functional programming languages~\cite{giles-thesis}, and the presence of joins has been shown~\cite{axelsen-kaarsgaard-fossacs16,kaarsgaard-axelsen-glueck-jlamp} to induce fixed point operators for modelling reversible recursion.

In short, Rfun is an untyped first-order language in which the arguments of functions are organized in left expressions (patterns) given by the grammar 
\[
l ::= x \mid c(l_1,\ldots,l_n) \mid \lfloor l \rfloor 
\]
where variables $x$ and constructors $c$ are taken from denumerable alphabets $\mathcal{V}$ respectively $S$. In other words, a function definition $f\, l \triangleq l'$ takes a pattern $l$ as argument and organizes its output as an expression $l'$.

Our work involves an unorthodox way of thinking about the denotational semantics of a function in a functional programming language to be explicitly constructed piecemeal by the (partial) denotations of its individual branches.
Categorically, this means that the denumerable alphabet $S$ is denoted by the (least) fix point of the functor $F:X \mapsto X \oplus 1$, which is given by algebraic compactness as the initial $F$-algebra \cite{barr} and corresponds to the denotation of the recursive type of natural numbers. Then, every value $c(l_1,\ldots,l_n)$ is naturally denoted by induction as a tree which has a root labelled by the symbol $c$, with a branch to every subtree $l_i$ for $1 \leq i \leq n$. We detail this construction in Section~\ref{sec:model}.

Finally, in order to denote the duplication/equality operator and the case expressions of Rfun, we require the notions of \textit{decidable equality} and \textit{decidable pattern matching}. Recall that in set theory, a set is \textit{decidable} (or \textit{has decidable equality}) whenever any pair of elements is either equal or different. Using the interpretation of restriction idempotents as propositions, we express decidability through the presence of \emph{complementary} restriction idempotents.

We conclude our study in Section~\ref{sec:categorical-models} with a definition of categorical models of reversible computing as join inverse rig categories with decidable equality and decidable pattern matching.

\section{Join inverse category theory}
\label{sec:join-inv}

While we assume prior knowledge of basic category theory, including monoidal categories and string diagrams, we briefly introduce some of the less well-known material on restriction and inverse categories (see also, e.g., \cite{cockett1,cockett2,cockett3,giles-thesis,guo-thesis}).

\begin{definition}
  A \emph{restriction structure} on a category consists of an operator 
  mapping each morphism $f : A \to B$ to a morphism $\ridm{f} : A \to A$ (called \emph{restriction idempotent}) such 
  that the following properties are satisfied:
  \begin{multicols}{2}
  \begin{enumerate}[(i)]
    \item $f \circ \ridm{f} = f$,
    \item $\ridm{f} \circ \ridm{g} = \ridm{g} \circ \ridm{f}$ for all $g : A \to C$,
    \item $\ridm{g \circ \ridm{f}} = \ridm{g} \circ \ridm{f}$ for all $g : A \to C$,
    \item $\ridm{g} \circ f = f \circ \ridm{g \circ f}$ for all $g : B \to C$.
  \end{enumerate}
  \end{multicols}
\end{definition}
A category with a restriction structure is called a \emph{restriction category}. A \textit{total} map is a morphism $f: A\to B$ such that $\ridm{f}=\text{id}_A$. Using an analogy from topology, the collection of restriction idempotents on an object $A$ is denoted $\mathcal{O}(A)$, and is sometimes even called the \emph{opens} on $A$. (Indeed, in the category $\cat{PTop}$ of topological spaces and partial continuous functions, the restriction idempotents coincide with the open sets.)
We recall now some basic properties of restriction idempotents:
\begin{lemma}
  In any restriction category, we have for all suitable $f$ and $g$ that
  \begin{multicols}{2}
  \begin{enumerate}[(i)]
    \item $\ridm{f} \circ \ridm{f} = \ridm{f}$,
    \item $\ridm{g \circ f} = \ridm{\ridm{g} \circ f}$,
    \item $\ridm{\ridm{g} \circ \ridm{f}} = \ridm{g} \circ \ridm{f}$, and
    \item $\ridm{g \circ f} \circ \ridm{f} = \ridm{g \circ f}$.
  \end{enumerate}
  \end{multicols}
\end{lemma}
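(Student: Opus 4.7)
The plan is to derive the four properties in a specific order, since they are tightly interdependent: (i) first, then (iv), then (ii), then an auxiliary identity $\ridm{\ridm{h}} = \ridm{h}$, and finally (iii).

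For (i), I would apply restriction to both sides of axiom (a), obtaining $\ridm{f \circ \ridm{f}} = \ridm{f}$, and then rewrite the left-hand side using axiom (c) with $g := f$ to get $\ridm{f \circ \ridm{f}} = \ridm{f} \circ \ridm{f}$. For (iv), I would observe that since $f \circ \ridm{f} = f$ by (a), post-composing with $g$ yields $g \circ f \circ \ridm{f} = g \circ f$; applying restriction and then (c) with $g$ replaced by $g \circ f$ and $f$ unchanged gives $\ridm{g \circ f} = \ridm{g \circ f \circ \ridm{f}} = \ridm{g \circ f} \circ \ridm{f}$.

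For (ii), I would start from axiom (d), namely $\ridm{g} \circ f = f \circ \ridm{g \circ f}$, apply restriction to both sides, and use (c) to obtain $\ridm{\ridm{g} \circ f} = \ridm{f} \circ \ridm{g \circ f}$. Commuting the two restriction idempotents via axiom (b) and then applying the already-proved (iv) then collapses the right-hand side to $\ridm{g \circ f}$. The auxiliary identity $\ridm{\ridm{h}} = \ridm{h}$ then falls out cleanly by instantiating (ii) with $g := h$ and $f := \ridm{h}$: the left-hand side becomes $\ridm{h \circ \ridm{h}} = \ridm{h}$ by (a), while the right-hand side becomes $\ridm{\ridm{h} \circ \ridm{h}} = \ridm{\ridm{h}}$ by (i). With this in hand, (iii) is immediate: axiom (c) with $g := \ridm{g}$ gives $\ridm{\ridm{g} \circ \ridm{f}} = \ridm{\ridm{g}} \circ \ridm{f}$, and the auxiliary identity replaces $\ridm{\ridm{g}}$ with $\ridm{g}$.

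The main obstacle is not any single computation (all four are short rewrites in the axioms) but rather identifying the correct order of derivation. In particular, (iii) is tempting to attack directly via (c), but this immediately produces $\ridm{\ridm{g}}$, forcing one to first prove the fixed-point property of restriction on idempotents — which itself requires (ii), which in turn requires (iv). Setting up the chain (i) $\Rightarrow$ (iv) $\Rightarrow$ (ii) $\Rightarrow$ $\ridm{\ridm{h}}{=}\ridm{h}$ $\Rightarrow$ (iii) is therefore the real content of the proof; once the order is fixed, each step reduces to one or two applications of the restriction axioms.
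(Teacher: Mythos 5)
Your proof is correct: each step typechecks against the four restriction axioms, and the dependency chain (i) $\Rightarrow$ (iv) $\Rightarrow$ (ii) $\Rightarrow$ $\ridm{\ridm{h}}=\ridm{h}$ $\Rightarrow$ (iii) is exactly the standard derivation going back to Cockett and Lack. The paper states this lemma without proof (it merely recalls these as basic properties of restriction idempotents), so there is no divergence to report.
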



As a trivial example, any category is a restriction category when equipped with the trivial restriction structure mapping $\ridm{f} = 1_A$ for all $f : A \to B$.

\begin{definition}
A morphism $f : A \to B$ in a restriction category is a partial isomorphism
whenever there exists a morphism $f^\dagger : B \to A$, the partial inverse of
$f$, such that $f^\dagger \circ f = \ridm{f}$ and $f \circ f^\dagger =
\ridm{f^\dagger}$.
\end{definition}

Note that the definite article -- \emph{the} partial inverse -- is justified, as partial inverses are unique whenever they exist.

\begin{definition}
An \emph{inverse category} is a restriction category in which every morphism is
a partial isomorphism.
\end{definition}

It is worth noting that this is not the only definition of an inverse category: historically, this mathematical structure has been defined as the categorical extension of inverse semigroups rather than as a particular class of restriction categories (see e.g.~\cite{kastl,lawson,hines-thesis}).

\begin{definition}
A \textit{zero object} in a restriction (or inverse) category is said to be a \emph{restriction zero} iff $\ridm{0_{A,A}} = 0_{A,A}$ for every zero endomorphism $0_{A,A}$.
\end{definition}

\begin{definition}
Parallel morphisms $f, g : A \to B$ of an inverse category are said to be \emph{inverse compatible}, denoted $f \asymp g$, if the following hold:
\begin{multicols}{2}
\begin{enumerate}[(i)]
  \item $g \circ \ridm{f} = f \circ \ridm{g}$ \hide{\emph{(restriction
  compatibility)}}
  \item $g^\dagger \circ \ridm{f^\dagger} = f^\dagger \circ \ridm{g^\dagger}$
   \hide{\emph{(corestriction compatibility)}}
\end{enumerate}
\end{multicols}
\end{definition}

By extension, one says that $S \subseteq \Hom(A,B)$ is inverse compatible if $s \asymp t$ for each $s,t \in S$. Note that \emph{disjoint} morphisms (those satisfying $g \circ \ridm{f} = 0_{A,B}$) are always inverse compatible. \\

The present work focuses on join inverse categories. The definition of such categories relies on the fact that in a restriction category $\cat{C}$, every hom-set $\cat{C}(A,B)$ gives rise to a poset when equipped with the following partial order: $f \leq g$ if and only if $g \circ \overline{f} = f$.

\begin{definition}[\cite{guo-thesis}]
\label{def:join_inv_cat}
An inverse category is a (countable) \emph{join inverse category} if it has a
restriction zero object, and satisfies that for all (countable) inverse
compatible subsets $S$ of all hom sets $\Hom(A,B)$, there exists a morphism
$\bigjoin_{s \in S} s$ such that
\begin{enumerate}[(i)]
\item $s \le \bigjoin_{s \in S} s$ for all $s \in S$, and $s \le t$ for all $s \in S$ implies $\bigjoin_{s \in S} s \le t$; \label{def:join_rest_sup}
\item $\ridm{\bigjoin_{s \in S} s} = \bigjoin_{s \in S} \ridm{s}$; \label{def:join_rest_cont}
\item $f \circ \left(\bigjoin_{s \in S} s \right) = \bigjoin_{s \in S}(f \circ s)$ for all $f : B \to X$; and
\item $\left(\bigjoin_{s \in S} s \right) \circ g = \bigjoin_{s \in S}(s \circ g)$ for all $g : Y \to A$.
\end{enumerate}
\end{definition}

On that matter, it is important to mention that there are significant mathematical results about join \emph{inverse} categories. In particular, there is an adjunction between the categories of join restriction categories and join inverse categories \cite{axelsen-kaarsgaard-fossacs16}.

In a join inverse category, every restriction idempotent $\ridm{e}$ has a \emph{pseudocomplement} $\ridm{e}^\perp$ given by $\bigvee_{\ridm{e'} \in C(\ridm{e})} \ridm{e'}$ where $C(\ridm{e}) = \{ \ridm{e'} \in \mathcal{O}(X) \mid \ridm{e'} \circ \ridm{e} = 0_{X,X}\}$. Indeed, the opens $\mathcal{O}(X)$ on any object $X$ form a frame (specifically a Heyting algebra) in any join inverse category (indeed, in any join restriction category, see \cite{cockett-garner}). Recall that in intuitionistic logic (which  Heyting algebras model), a proposition $p$ is \emph{decidable} when $p \lor \neg p$ holds. For our purposes, restriction idempotents with this property turn out to be very useful, and so we define them analogously:

\begin{definition}
A restriction idempotent $\ridm{e}$ is said to be \emph{decidable} if $\ridm{e} \vee \ridm{e}^\perp = \id$.
\end{definition}

Join inverse categories are canonically enriched in domains~\cite{axelsen-kaarsgaard-fossacs16,kaarsgaard-axelsen-glueck-jlamp}. This has the pleasant property that any \emph{morphism scheme}, i.e., Scott-continuous function $\Hom(X,Y) \to \Hom(X,Y)$ has a fixed-point:
\begin{theorem}[\cite{axelsen-kaarsgaard-fossacs16,kaarsgaard-axelsen-glueck-jlamp}]
In a join inverse category, any morphism scheme $\Hom(X,Y) \tot{\phi} \Hom(X,Y)$ has a least fixed point $X \tot{\fix \phi} Y$.
\end{theorem}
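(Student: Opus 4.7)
The plan is to use a Kleene-style construction: form the iterates $\phi^n(0_{X,Y})$ starting from the zero morphism, show that they form an inverse compatible chain whose join exists in the join inverse category, and then verify this join is the least fixed point of $\phi$.

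First I would establish that $\Hom(X,Y)$ has a least element in the restriction order, namely the zero morphism $0_{X,Y}$, which exists because a join inverse category has a restriction zero object. Indeed, for any $f : X \to Y$ one has $f \circ \ridm{0_{X,Y}} = f \circ 0_{X,X} = 0_{X,Y}$, so $0_{X,Y} \le f$. Since $\phi$ is a morphism scheme, it is in particular monotone with respect to the restriction order, so the iterates $\phi^n(0_{X,Y})$ form an ascending chain $0_{X,Y} \le \phi(0_{X,Y}) \le \phi^2(0_{X,Y}) \le \cdots$.

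Next I would verify that this chain is inverse compatible so that its join is available from Definition~\ref{def:join_inv_cat}. For parallel morphisms $f \le g$, the restriction order gives $g \circ \ridm{f} = f$; since $f = f \circ \ridm{f}$, this yields $g \circ \ridm{f} = f \circ \ridm{g}$ (using $\ridm{f} \le \ridm{g}$, which follows from taking restrictions in $f \le g$). A symmetric argument on $f^\dagger \le g^\dagger$ (which follows since $(-)^\dagger$ is order-preserving on comparable pairs in an inverse category) gives corestriction compatibility. Hence any pair in the chain is inverse compatible, and by item~(\ref{def:join_rest_sup}) the join $\fix \phi \defeq \bigvee_{n \in \N} \phi^n(0_{X,Y})$ exists.

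Finally I would verify the fixed point and leastness properties. Scott-continuity of $\phi$ (as a morphism scheme, i.e., it preserves suprema of directed sets) together with the join-preservation laws give
\[
\phi(\fix \phi) = \phi\Bigl(\bigvee_{n \in \N} \phi^n(0_{X,Y})\Bigr) = \bigvee_{n \in \N} \phi^{n+1}(0_{X,Y}) = \bigvee_{n \in \N} \phi^n(0_{X,Y}) = \fix \phi,
\]
where the last equality absorbs the missing $n=0$ term, which is $0_{X,Y}$, harmlessly. For leastness, suppose $f$ is any fixed point of $\phi$. A straightforward induction shows $\phi^n(0_{X,Y}) \le f$ for every $n$: the base case is $0_{X,Y} \le f$, and monotonicity of $\phi$ together with $\phi(f) = f$ lifts the bound to $\phi^{n+1}(0_{X,Y}) \le \phi(f) = f$. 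The universal property in item~(\ref{def:join_rest_sup}) then gives $\fix \phi \le f$, completing the proof.

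The main obstacle I anticipate is the verification of inverse compatibility for the iterate chain, specifically the corestriction compatibility clause on the daggers; this requires checking that the restriction order is preserved by the partial dagger, which is where the inverse-categorical structure (as opposed to mere restriction-categorical structure) is essential.
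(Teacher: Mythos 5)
Your proof is correct and follows the same route the paper (via its cited sources) relies on: the homset is a pointed dcpo under the restriction order because comparable morphisms are inverse compatible, so the Kleene iteration $\bigvee_n \phi^n(0_{X,Y})$ exists and Scott-continuity of the morphism scheme yields the least fixed point. The one step you flag as delicate — that $f \le g$ implies $f^\dagger \le g^\dagger$, giving corestriction compatibility — does go through exactly as you anticipate, using that restriction idempotents are self-inverse.
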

As one might imagine, this turns out to be useful in giving semantics to (systems of mutually) recursive functions. We note in this regard that with the canonical domain enrichment, both partial inversion $f \mapsto f^\dagger$ and the action of any join restriction functor on morphisms is continuous.

\newcommand{\Pfn}{\cat{Pfn}}

We conclude this section with a few examples:
The category $\cat{PInj}$ of sets and partial injections is a canonical example
of an inverse category (even further, by the categorical Wagner-Preston
theorem~\cite{cockett1}, every (locally small) inverse
category can be faithfully embedded in $\cat{PInj}$). For a partial injection
$f : A \to B$, define its restriction idempotent $\ridm{f} : A \to A$ by
$\ridm{f}(x) = x$ if $f$ is defined at $x$, and undefined otherwise. With this definition, every partial injection is a partial isomorphism. Moreover, the partial order on homsets corresponds to the usual partial order on partial functions: that is, for $f, g \in \cat{PInj}(A,B)$, $f \leq g$ if and only if, for every $x \in A$, $f$ is defined at $x$ implies that $g$ is defined at $x$ in such a way that $f(x)=g(x)$. Observe that \emph{every} restriction idempotent in $\cat{PInj}$ is decidable.

Another example of a join inverse categories is the category $\cat{PHom}$ of topological spaces and partial 
homeomorphisms with open range and domain of definition. Restrictions and joins are given as in $\cat{PInj}$, though showing that the join of partial homeomorphisms is again a partial homeomorphism requires use of the so-called \emph{pasting lemma}. In this category, restriction idempotents on a topological space correspond precisely to its open sets. As a consequence, the decidable restriction idempotents in $\cat{PHom}$ correspond to the \emph{clopen} sets, i.e., those simultaneously open and closed.

\section{Join inverse rig categories}
\label{sec:bimonoidal}

In this section, we provide the categorical foundations of join inverse rig categories, which are the basis of our model for reversible computing.

\begin{definition}[\cite{giles-thesis,axelsen-kaarsgaard-fossacs16}]
An (join) inverse category $\cat{C}$ with a restriction zero object $0$ is said to have a (join-preserving) \emph{disjointness tensor} if it is equipped with a symmetric monoidal (join-preserving) functor $- \oplus -$ (with
left unitor $\lambda_\oplus$, right unitor $\rho_\oplus$, associator
$\alpha_\oplus$, and commutator $\gamma_\oplus$) such that the restriction zero $0$ is tensor unit, and the canonical injections given by 
\[
\amalg_1 = (1_A \oplus 0_{0,B}) \circ \rho_\oplus^{-1} : A \to A \oplus B
\qquad
\text{ and }
\qquad
\amalg_2 = (0_{0,A} \oplus 1_B) \circ \lambda_\oplus^{-1} : B \to A \oplus B
\]
are jointly epic.
\end{definition}

At this point, to define the notion of an inverse product (which first appeared in \cite{giles-thesis}), we recall the definition of a $\dagger$-Frobenius semialgebra (see, \eg, \cite{giles-thesis}), that we later use to describe well-behaved products on inverse categories.

\begin{definition}
  In a (symmetric) monoidal $\dagger$-category, a \emph{$\dagger$-Frobenius semialgebra} is a pair $(X, \Delta_X)$ of an object $X$ and a map $\Delta_X : X \to X \otimes X$ such that the diagrams below commute.
\begin{center}
\begin{tikzpicture}
\node (XXX) {$X \otimes (X \otimes X)$};
\node (XXX') [right=15mm of XXX] {$(X \otimes X) \otimes X$};
\node (XX) [below of=XXX]{$X \otimes X$};
\node (XX') [below of=XXX'] {$X \otimes X$};
\node (phantom) [left=10mm of XX'] {\phantom{X}};
\node (X) [below=5mm of phantom] {$X$};

\draw[<-] (XXX) to node [morphism] {$\alpha$} (XXX');
\draw[<-] (XXX) to node [morphism,swap] {$\id_X \otimes \Delta_X$} (XX);
\draw[<-] (XXX') to node [morphism] {$\Delta_X \otimes \id_X$} (XX');
\draw[<-] (XX) to node [morphism,swap] {$\Delta_X$} (X);
\draw[<-] (XX') to node [morphism] {$\Delta_X$} (X);
\end{tikzpicture}
\begin{tikzpicture}
\node (X) {$X$};
\node (XX) [above=10mm of X, left of=X] {$X \otimes X$};
\node (XX') [below=10mm of X, right of=X] {$X \otimes X$};
\node (XXX) [above=10mm of X, right of=X] {$X \otimes (X \otimes X)$};
\node (XXX') [below=10mm of X, left of=X] {$(X \otimes X) \otimes X$};

\draw[<-] (XX) to node [morphism,swap] {$\Delta_X$} (X);
\draw[<-] (X) to node [morphism] {$\Delta_X^\dagger$} (XX');
\draw[<-] (XX) to node [morphism] {$\alpha \circ (\Delta_X^\dagger \otimes \id_X)$} (XXX);
\draw[<-] (XX) to node [morphism,swap] {$\alpha^{-1} \circ (\id_X \otimes \Delta_X^\dagger)$} (XXX');
\draw[<-] (XXX) to node [morphism] {$\id_X \otimes \Delta_X$ \ \phantom{$() \circ \alpha^{-1}$}} (XX');
\draw[<-] (XXX') to node [morphism,swap] {$\Delta_X \otimes \id_X$} (XX');
\end{tikzpicture}
\end{center}
Formally, the leftmost diagram (and its dual) makes $(X, \Delta_X)$ a cosemigroup,
and $(X, \Delta_X^\dagger)$ a semigroup, while the diagram to the right is called
the \emph{Frobenius condition}. One says that a $\dagger$-Frobenius semialgebra
$(X, \Delta_X)$ is \emph{special} if $\Delta_X^\dagger \circ \Delta_X = \id_X$, and
\emph{commutative} if the monoidal category in which it lives is symmetric and
$\gamma_{X,X} \circ \Delta_X = \Delta_X$ (where $\gamma_{X,X}$ is the symmetry of the monoidal category).
\end{definition}

Note that this definition is slightly nonstandard, in that $\dagger$-Frobenius algebras are more often defined in terms of a composition $X \to X \otimes X$ rather than a cocomposition $X \otimes X \to X$. This choice is entirely cosmetic, however, and has no bearing on the algebraic structure defined. Next, we recall the definition of an inverse product.

\begin{definition}
An inverse category $\cat{C}$ is said to have an \emph{inverse
product}~\cite{giles-thesis} if it is equipped with a symmetric monoidal
functor $- \otimes -$ (with
left unitor $\lambda_\otimes$, right unitor $\rho_\otimes$, associator
$\alpha_\otimes$, and commutator $\gamma_\otimes$) equipped with a natural
transformation $X \tot{\Delta} X \otimes X$ such that the pair $(X, \Delta_X)$ is a
special and commutative $\dagger$-Frobenius semialgebra for any object $X$.
\end{definition}

In the context of inverse products, we think of the cosemigroup composition $\Delta : X \to X \otimes X$ as a \emph{duplication} map, and its inverse as a (partial) \emph{equality test} map defined precisely on pairs of equal things. In this way, categories with inverse products are really inverse categories with robust notions of duplication and partial equality testing.

When clear from the context, we omit the subscripts on unitors,
associators, and commutators. We are finally ready to define join inverse rig categories.

\begin{definition}
\label{def:bimonoidal}
A join inverse rig category is a join inverse category equipped
with a join-preserving inverse product $(\otimes,1)$ and a join-preserving disjointness tensor 
$(\oplus,0)$, such that there are natural isomorphisms 
$X \otimes (Y \oplus Z) \xrightarrow{\delta_L} (X \otimes Y) \oplus (X \otimes Z)$
and
$(X \oplus Y) \otimes Z \xrightarrow{\delta_R} (X \otimes Z) \oplus (Y \otimes Z)$
(the \emph{distributors}) natural in $X$, $Y$, and $Z$, and natural isomorphisms 
$0 \otimes X \xrightarrow{\nu_L} 0$ and $X \otimes 0 \xrightarrow{\nu_R} 0$ (the \emph{annihilators}) natural in $X$, such that $(\otimes,\oplus,0,1)$ form a rig category in the usual sense.
\end{definition}

Recall that a category $\cat{C}$ is algebraically compact for a class $\mathcal{L}$ of endofunctors on $\cat{C}$ if every endofunctor $F$ in the class $\mathcal{L}$ has a unique fixpoint $\mu X. FX$ (or shortly $\mu F$) given by the initial $F$-algebra.

\begin{theorem}\label{thm:bimonoidal_embedding}
  Any locally small inverse category can be faithfully embedded in a category of which is
  \begin{enumerate}[(i)]
    \item a join inverse rig category,
    \item algebraically compact for join inverse-preserving functors.
  \end{enumerate}
\end{theorem}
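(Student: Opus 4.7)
The plan is to construct the target category via a presheaf-based completion of $\cat{C}$, then verify that it carries all the required structure. The overall strategy mirrors what seems implicit in the preceding subsection: start from the Guo--Cockett-style embedding into a join restriction category built from presheaves, restrict to its inverse part, and equip it with the rig structure via Day convolution and coproducts.

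First I would apply Guo's embedding theorem to faithfully embed the locally small inverse category $\cat{C}$ into the join restriction category $\parcat{C}$. Since the inclusion of join inverse categories into join restriction categories admits a right adjoint $\Inv$ which preserves faithful functors (via the adjunction cited in the preceding section), applying $\Inv$ to the embedding yields a faithful embedding of $\cat{C}$ into the join inverse category $\cat{D} := \Inv(\parcat{C})$.

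Next I would equip $\cat{D}$ with the rig structure. The disjointness tensor is obtained from the coproduct structure on the ambient presheaf category $[\Total(\Split(\cat{C}))^\text{op},\cat{Set}]$: pointwise disjoint union descends through the $\Par$-construction and restricts to $\Inv$, giving a symmetric monoidal join-preserving functor with the initial presheaf as both restriction zero and tensor unit, and with the canonical injections jointly epic by universality of coproducts. The inverse product is obtained via Day convolution with respect to a freely added symmetric monoidal structure on the base; on each representable the special commutative $\dagger$-Frobenius semialgebra structure arises from the diagonal, with its partial inverse given by the equality test familiar from $\cat{PInj}$, and this structure extends to all objects by the density of representables. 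The distributors $\delta_L,\delta_R$ and annihilators $\nu_L,\nu_R$ then follow from the cocontinuity of Day convolution in each argument (so that it distributes over coproducts), together with universality of the initial object; the rig coherence conditions reduce to the standard coherence already present in the underlying cocomplete presheaf category.

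Finally, for algebraic compactness I would invoke the canonical $\Dcpo$-enrichment of any join inverse category: Scott-continuous endofunctors $F : \cat{D} \to \cat{D}$ admit initial $F$-algebras as the colimit of the chain $0 \to F(0) \to F^2(0) \to \cdots$, and any join inverse-preserving functor is Scott-continuous in this enrichment by continuity of join and composition along the domain order. The main obstacle is verifying that the Day convolution genuinely yields an \emph{inverse} product—that is, that the $\dagger$-Frobenius semialgebra structure on representables lifts coherently through the embedding and interacts correctly with the dagger, satisfying the speciality condition $\Delta^\dagger \circ \Delta = \id$ at every object and not just at representables. A secondary technical difficulty is establishing the joint epicness of the coproduct injections after passing through $\Par$ and $\Inv$, which requires careful attention to how the gap-stable admissible subobjects interact with disjoint unions.
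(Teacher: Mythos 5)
Your proposal takes a very different and much harder route than the paper, and it contains a genuine gap that you yourself flag but do not close. The theorem only asks for \emph{some} ambient category with the listed properties, so the paper simply takes $\cat{PInj}$: it is a join inverse rig category (disjoint union as disjointness tensor, cartesian product of sets restricted to partial injections as inverse product, with the diagonal $x \mapsto (x,x)$ as $\Delta$ and the partial equality test as $\Delta^\dagger$), it is algebraically compact for the relevant functors, and the categorical Wagner--Preston theorem already gives a faithful embedding of any locally small inverse category into it. No presheaf completion, Day convolution, or $\Inv$-adjunction is needed.

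The gap in your version is the inverse product. The base $\cat{C}$ is an arbitrary locally small inverse category with no monoidal structure assumed, so there is nothing canonical to Day-convolve with; ``freely adding'' a symmetric monoidal structure and then claiming the special commutative $\dagger$-Frobenius semialgebra structure on representables ``extends to all objects by density'' does not work as stated. Density of representables lets you extend a functor along colimits, but it does not transport the dagger, and the speciality equation $\Delta_X^\dagger \circ \Delta_X = \id_X$ and the Frobenius condition are equations involving both $\Delta$ and $\Delta^\dagger$; they are not preserved by colimit-extension in general, and representables of an arbitrary inverse category do not even carry a diagonal to begin with (the diagonal of $\cat{Set}$-valued presheaves is not a map in the image of $\Inv(\parcat{C})$ unless you verify it is a partial isomorphism there). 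Since you identify this as ``the main obstacle'' without resolving it, the argument is incomplete at precisely the point where all the work lies. A secondary issue is your sketch of algebraic compactness: initial algebras as colimits of $\omega$-chains presuppose ordinary cocompleteness, which inverse categories essentially never have; the correct argument in this setting goes through the canonical domain enrichment and joins of $\omega$-chains of partial isomorphisms, or, as in the paper, is simply inherited by exhibiting the concrete model $\cat{PInj}$.
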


\begin{proof}
  $\cat{PInj}$ has all of these properties (see also \cite{axelsen-kaarsgaard-fossacs16}), and any locally small inverse category embeds faithfully into it by the categorical Wagner-Preston theorem \cite[Prop.~3.11]{heunen}.
\end{proof}

In light of this, one can assume without loss of generality that our join inverse rig categories from here on out are algebraically compact for restriction endofunctors, so that every restriction endofunctor $F$ has a unique fixpoint. Note that this provides the first steps towards an interpretation of recursive types in Idealized Theseus~\cite{theseus}.



On a final note, our definition of join inverse rig categories is fairly similar to the notion of distributive join inverse category in Giles' terminology \cite[Sec.~9.2]{giles-thesis}.

\section{A primer in Rfun}
The syntax of the programming language Rfun can be summarised by the following grammars:
\begin{align*}
\text{Left Expressions } l &::= x \mid c(l_1, \ldots, l_n) \mid \lfloor l \rfloor \\
\text{Expressions } e ::= l  &\mid \textbf{let } l_{out} = f\, l_{in} \textbf{ in } e\\ &\mid \textbf{rlet } l_{in} = f\, l_{out} \textbf{ in } e\\ &\mid \textbf{case}\, l \textbf{ of } \{l_i \to e_i\}_{i=1}^m\\
\text{Definitions } d &::= f\, x \triangleq e\\
\text{Programs } q &::= d_1;\ldots;d_n
\end{align*}

\newcommand{\xjudg}[4]{\langle #1,#2\rangle \vdash #3 \Downarrow #4}
\newcommand{\judg}{\xjudg{q}{\sigma}{e}{v}}
\newcommand{\judgnov}{\xjudg{q}{\sigma}{e}{}}
\newcommand{\leftexparr}{\vdash}

The \emph{values} of Rfun are of the form $c(v_1, \dots, v_n)$ for $n \ge 0$, where $c$ is some constructor name, and each $v_i$ is a value. A value of the form $c()$ is called a \emph{symbol}, and is typically written simply as $c$. Tuples $\langle l_1, \dots, l_n \rangle$ are also permitted, though these are taken to be syntactic sugar for $\langle\rangle (l_1, \dots, l_n)$ where $\langle\rangle$ is a distinguished constructor name. Note that,
contrary to the original presentation of the language, we make the simplifying assumption that the argument to a function definition is always a variable, and not a (more general) left expression. We recover the original expressivity of Rfun by introducing some syntactic sugar: definitions $f\, l \triangleq e$ stand for terms $$f\, x \triangleq \textbf{case } x \textbf{ of } l \to e.$$

It is important to recall before going any deeper in the presentation of Rfun that:
\begin{itemize}
 \item We assume three distinct, denumerable sorts for variables, constructor names, and functions.
 \item We suppose that programs in the same sequences of definitions have (pairwise) distinct functional identifiers. 
 \item Variables may appear only once in left expressions, and may be used only once in expressions (linearity).
 \item Domains of substitutions are (pairwise) disjoint.
\end{itemize}

Now a presentation of Rfun's big step operational semantics is  given, with expression judgement $\judg$ instead of the notation $\sigma \vdash_q e \hookrightarrow v$ from \cite{yokoyama-axelsen-glueck-rc2011}. Concretely, the pair of a program $q$ and a substitution (i.e. partial function) $\sigma$ constitutes a programming context $\langle q, \sigma\rangle$. Then, the expression judgement $\judg$ means that the expression $e$ evaluates to the value $v$ in the context $\langle q, \sigma\rangle$. Let us write $\judgnov$ when there is some value $v$ such that $\judg$.

As for the pattern matching operations which guide the formation of subtitutions, we replace $v \triangleleft l \rightsquigarrow \sigma$ \cite[Fig.~3, pp.~19]{yokoyama-axelsen-glueck-rc2011} by the more restrictive statement $\xjudg{q}{\sigma}{l}{v}$. The relation between those two expressions is given by the following correspondence:
$$
\begin{prooftree}
v \triangleleft l \rightsquigarrow \sigma
\Justifies
\forall q.\, \judg
\end{prooftree}
$$
\ \\
This leads us to the following operational semantics, which guarantees that computations are reversible (see \cite{yokoyama-axelsen-glueck-rc2011}). Note in particular the distinction between $\mathbf{let}$ and $\mathbf{rlet}$-expressions, which are used to call functions in the \emph{forward} and \emph{backward} directions respectively.

$$
\begin{prooftree}
\justifies
\langle q, \{x \mapsto v\}\rangle \vdash x \Downarrow v
\end{prooftree}
\qquad
\begin{prooftree}
\lfloor v \rfloor \downarrow = v'\qquad \xjudg{q}{\sigma}{l}{v'}
\justifies
\xjudg{q}{\sigma}{\lfloor l \rfloor}{v}
\end{prooftree}
$$
\ \\
$$
\begin{prooftree}
\langle q, \sigma_1\rangle \vdash l_1 \Downarrow v_1 \quad\cdots\quad \langle q, \sigma_n\rangle \vdash l_n \Downarrow v_n
\justifies
\langle q, \uplus_{i=1}^n \sigma_i\rangle \vdash c(l_1,\ldots,l_n) \Downarrow c(v_1,\ldots,v_n)
\end{prooftree}
\qquad
\begin{prooftree}
f\, x_f \triangleq e_f \in q \quad \xjudg{q}{\sigma}{x}{v'}
\justifies
\begin{prooftree} 
\xjudg{q}{\sigma_f}{x_f}{v'} \quad \xjudg{q}{\sigma_f}{e_f}{v}
\justifies
\xjudg{q}{\sigma}{f\, x}{v}
\end{prooftree}
\thickness=0pt
\end{prooftree}
$$
\ \\
$$
\begin{prooftree}
\xjudg{q}{\sigma_\text{in}}{f\, l_\text{in}}{v_\text{out}} \qquad \xjudg{q}{\sigma_\text{out} \uplus \sigma_e}{e}{v}
\justifies
\begin{prooftree}
\xjudg{q}{\sigma_\text{out}}{l_\text{out}}{v_\text{out}}
\justifies
\xjudg{q}{\sigma_\text{in} \uplus \sigma_e}{\textbf{let}\, l_\text{out} = f\, l_\text{in} \textbf{ in } e}{v}
\end{prooftree}
\thickness=0pt
\end{prooftree}
$$
\ \\
$$
\begin{prooftree}
\xjudg{q}{\sigma_\text{out}}{f\, l_\text{out}}{v_\text{in}} \qquad \xjudg{q}{\sigma_\text{out} \uplus \sigma_e}{e}{v}
\justifies
\begin{prooftree}
\xjudg{q}{\sigma_\text{in}}{l_\text{in}}{v_\text{in}}
\justifies
\xjudg{q}{\sigma_\text{in} \uplus \sigma_e}{\textbf{rlet}\, l_\text{in} = f\, l_\text{out} \textbf{ in } e}{v}
\end{prooftree}
\thickness=0pt
\end{prooftree}
\qquad
\begin{prooftree}
\xjudg{q}{\sigma_l}{l}{v'} \qquad \xjudg{q}{\sigma_{l_j} \uplus \sigma_t}{e_j}{v}
\justifies
\begin{prooftree}
j=\text{min}\{i \mid \forall q.\, \xjudg{q}{\sigma_{l_i}}{l_i}{v'}\}
\justifies
\begin{prooftree}
= \text{min}\{i \mid \forall q.\, l' \in \text{leaves}(e_i) \wedge \xjudg{q}{-}{l'}{v}\}
\justifies
\langle q, \sigma_l \uplus \sigma_t \rangle \vdash_q \textbf{case } l \textbf{ of } \{l_i \to e_i\}_{i=1}^m \Downarrow v
\end{prooftree}
\thickness=0pt
\end{prooftree}
\thickness=0pt
\end{prooftree}
$$

\section{A categorical model of Rfun}
\label{sec:model}
To give a model of Rfun, we start with a join inverse rig category  and provide
\begin{enumerate}
    \item a construction of \emph{values} as an object $T(S)$ over a given alphabet $S$ (thought of as an alphabet of symbols),
    \item an interpretation of open left expressions with $k$ free variables as morphisms $T(S)^{\otimes k} \to T(S)$,
    \item an interpretation of open expressions with $k$ free variables as morphisms $T(S)^{\otimes k} \to T(S)$ in a \emph{program context},
    \item an interpretation of function definitions as open terms
    with a single free variable, and
    \item an interpretation of programs as a sum of function definitions wrapped in a fixed point (the \emph{program context}).
\end{enumerate}

\subsection{Values}
We start by constructing a denumerable object $S$ of \emph{symbols}, each identified by a unique morphism $1 \to S$ (where $1$ is unit of the inverse product). Since the sort of symbols is denumerable by assumption, provided that we can construct such an object, we can uniquely identity a symbol $s$ with a morphism $1 \to S$. Straightforwardly, we define $S$ to be the least fixed point of the join restriction functor $N(X) = 1 \oplus X$, via algebraic compactness: This yields an isomorphism $S \tot{\unfold_S} S \oplus 1$ (by Lambek's lemma) with inverse $S \oplus 1 \tot{\fold_S} S$. This allows us to identity the first symbol $s_1$ with $1 \tot{\amalg_1} 1 \oplus S \tot{\fold_S} S$, the second symbol $s_2$ with $1 \tot{\amalg_1} 1 \oplus (1 \oplus S) \tot{\id \oplus \fold_S} 1 \oplus S \tot{\fold_S} S$, and so on. For this reason, we will simply write $1 \tot{s} S$ for the morphism corresponding to the symbol $s$. Note that this has a partial inverse $S \tot{s^\dagger} 1$, which we think of as a corresponding \emph{assertion} that a given symbol is precisely $s$.

With this in hand, we can construct the object $T(S)$ of Rfun values, where the functor $T$ is defined as follows:
$$T(X) = \mu K. X \otimes L(K) \qquad \qquad L(X) = \mu K. 1 \oplus (X \otimes K)$$

Intuitively, $L$ maps an object $X$ to that of lists of $X$, while $T$ maps an object $X$ to nonempty finite trees with $X$-values at each node. In $\cat{PInj}$, a few examples of elements of $T(X)$ (for some set $X$ and $a,b,c \in X$) are shown in the figure below.

\begin{wrapfigure}{r}{0.4\textwidth}
\begin{center}
\begin{tikzpicture}
\node (c) {\footnotesize $c$};

\node (b') [right=15mm of c] {$b$};
\node (c') [below=8mm of b'] {$c$};

\node (a2) [right=20mm of b'] {$a$};
\node (a3) [below=8mm of a2] {$a$};
\node (b1) [left=8mm of a3] {$b$};
\node (b2) [right=8mm of a3] {$b$};
\node (c2) [below=8mm of a3] {$c$};
\node (a4) [right=3mm of c2] {$a$};
\node (b3) [right=6mm of a4] {$b$};

\draw[-] (a2) to node {} (b1);
\draw[-] (a2) to node {} (b2);
\draw[-] (a2) to node {} (a3);
\draw[-] (a3) to node {} (c2);
\draw[-] (b2) to node {} (a4);
\draw[-] (b2) to node {} (b3);

\draw[-] (b') to node {} (c');

\node (l1) [below=28mm of c] {\scriptsize $(1)$};
\node (l2) [below=28mm of b'] {\scriptsize $(2)$};
\node (l3) [below=28mm of a2] {\scriptsize $(3)$};
\end{tikzpicture}
\end{center}
\end{wrapfigure}

The object $T(S)$ allows us to represent terms in Rfun very naturally, namely by their syntax trees. As such, $(1)$ above corresponds to the value $c$, $(2)$ to $b(c)$, and $(3)$ to the value $a(b,a(c),b(a,b))$. Though this is often how untyped programming languages are modelled, we do not formally require $T(S)$ to be an universal object of the category, as long as it is rich enough for $\Hom(1, T(S))$ to uniquely encode all values.

As in the case for $S$, we have isomorphisms $L(X) \tot{\unfold_L} 1 \oplus (X \otimes L(X))$ (with inverse $\mathrm{fold}_L$) and $T(X) \tot{\unfold_T} X \otimes L(T(X))$ (with inverse $\mathrm{fold}_T$) by Lambek's lemma.
The object $L(X)$ can be thought of as lists over $X$: The morphism $1 \tot{\amalg_1} 1 \oplus (X \otimes L(X)) \tot{\mathrm{fold}_L} L(X)$ is thought of as the empty list $[]$, and $X \otimes L(X) \tot{\amalg_2} 1 \oplus (X \otimes L(X)) \tot{\mathrm{fold}_L} L(X)$ as the usual \emph{cons} operation. By combining these, we obtain an inductive family of morphisms $X^{\otimes n} \tot{\mathrm{pack}_n} L(X)$ mapping $n$-ary tuples into lists, with $\mathrm{pack}_0$ given by $1 \tot{[]} L(X)$, and $\mathrm{pack}_{n+1}$ by $X^{\otimes n+1} \tot{\cong} X \otimes X^{\otimes n} \tot{\id \otimes \mathrm{pack}_n} X \otimes L(X) \tot{\mathrm{cons}} L(X)$. Their partial inverses, $L(X) \tot{\mathrm{unpack}_n} X^{\otimes n}$, can be thought of as unpacking lists of length precisely $n$ into an $n$-ary tuple, being undefined on lists of any other length. In particular, one can show that $\mathrm{unpack}_n$ and $\mathrm{unpack}_m$ are disjoint, so that $\ridm{\mathrm{unpack}_n} \circ \ridm{\mathrm{unpack}_m} = 0_{L(X),L(X)}$ whenever $n \neq m$.

\subsection{Left expressions and patterns}
In order to continue with the construction of left expressions, we first need to make one of two assumptions about decidability. Say that an object has \emph{decidable equality} if the restriction idempotent $\ridm{\Delta_X^\dagger}$ is decidable.

\begin{assumption}
$T(S)$ has decidable equality.
\end{assumption}

We justify this terminology by the fact that the cocomposition $X \tot{\Delta_X} X \otimes X$ is thought of as \emph{duplication}. As such, $\Delta_X^\dagger$ is only ever defined for \emph{results} of duplication, i.e., points which are equal. This assumption allows us to define the \emph{duplication/equality} operator on $T(S)$ as shown in Figure~\ref{fig:leftexp_sem}. The morphism is described using string diagrams read from bottom to top, with parallel wires representing inverse products.
\begin{figure}
    \centering
    \includegraphics[width=\textwidth]{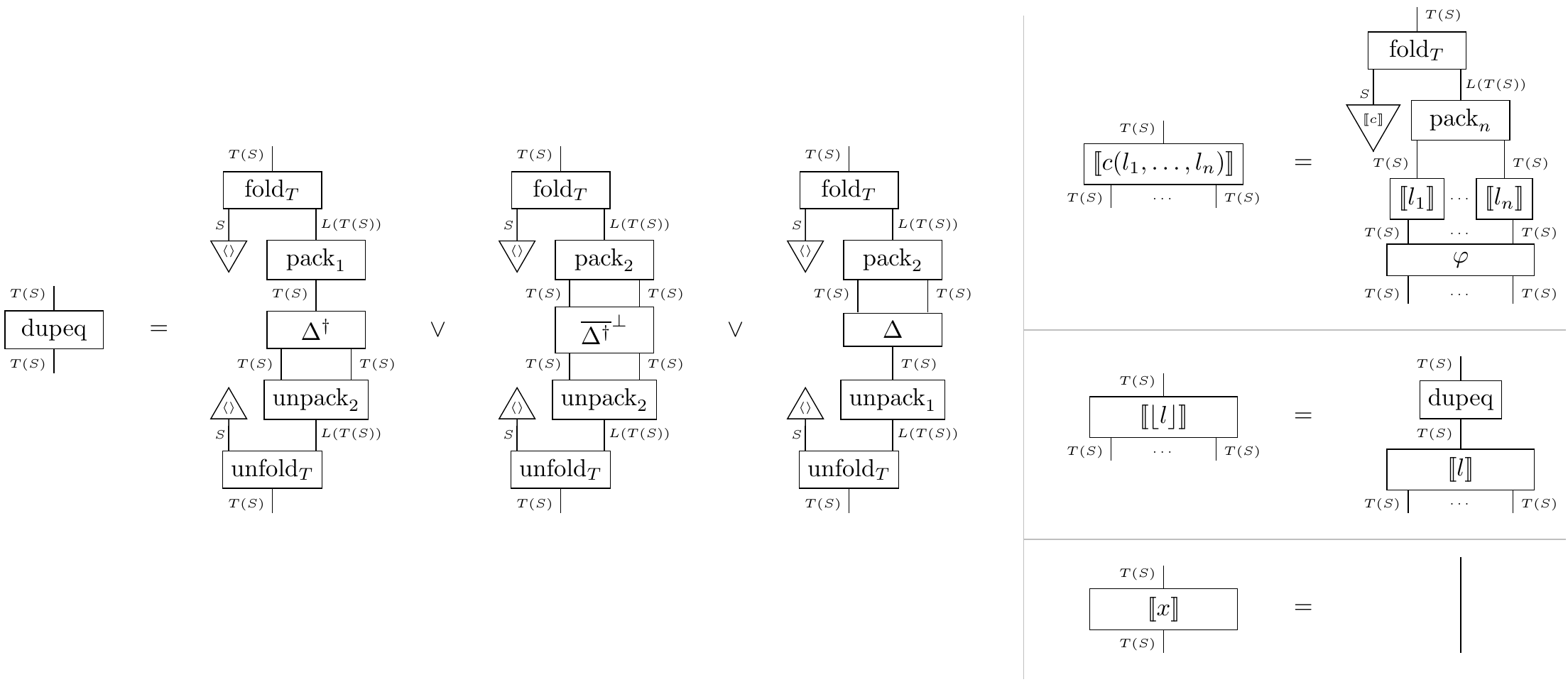}
    \caption{The semantics of left expressions.}
    \label{fig:leftexp_sem}
\end{figure}   
The three morphisms that join to form the definition of $T(S) \tot{\mathrm{dupeq}} T(S)$ correspond to the three cases in the definition of duplication/equality in Figure~\ref{fig:dupeq}: The first corresponds to the case where $\lfloor \langle x,y \rangle \rfloor = \langle x \rangle$ when $x=y$, the second to $\lfloor \langle x,y \rangle \rfloor = \langle x,y \rangle$ when $x \neq y$, and the third to $\lfloor \langle x \rangle \rfloor = \langle x,x \rangle$. That this join exists at all follows by the fact that these morphisms are pairwise disjoint (the first and second morphism are both disjoint from the third since $\mathrm{unpack}_1$ and $\mathrm{unpack}_2$ are disjoint; and the two first morphisms are disjoint since $\Delta^\dagger$ and $\ridm{\Delta^\dagger}^\perp$ are disjoint). Note the use of the symbol $\langle \rangle$, representing the fact that tuples $\langle l_1, \dots, l_n \rangle$ in Rfun are simply syntactic sugar for $\langle\rangle (l_1, \dots, l_n)$ with $\langle\rangle$ a distinguished symbol. Interestingly, from this definition, it is straightforward to show that $\mathrm{dupeq}^\dagger = \mathrm{dupeq}$.

To give a semantics to constructed terms and variables, we start with the idea that the free variables of a (left) expression are interpreted as wires of $T(S)$ type going into the denotation, and that the result of a denotation is again of $T(S)$ type. As such, a (left) expression with $k$ free variables is interpreted as a morphism $T(S)^{\otimes k} \to T(S)$. 

The semantics of open left expressions is shown in Figure~\ref{fig:leftexp_sem}. Note the permutation $\varphi$ of variables wires at the beginning of $[\![{c(l_1, \dots, l_n) ]\!]}$. This is necessary since free variables may not be used in the order they are given (consider, e.g., the program that takes any tuple $\langle x,y \rangle$ as input and returns $\langle y,x \rangle$ as output), so some reordering must take place first. Since a variable expression $x$ contains precisely one free variable, and nothing is required to further prepare it for use, its semantics is simply the identity. As expected, the semantics of a duplication/equality expression is simply given by handing off the semantics of the inner left expression to the duplication/equality operator previously constructed.

The semantics of left expressions shown in Figure~\ref{fig:leftexp_sem} concern their use as \emph{expressions}, but left expressions can also be used as \emph{patterns} in pattern matching case expressions. Fortunately, the interpretation of a left expression as a pattern is simply the partial inverse to its interpretation as an expression. Partiality is key to this insight: Since a left expression describes the formation of a value of a given form, its partial inverse will only ever be \emph{defined} on values of that form. Further, while an open expression consumes (parts of) variable bindings in order to produce a value, patterns do the opposite, consuming a value to produce a variable binding which binds subvalues to free variables. This is reflected in the fact that the types of a left expression as a pattern is then $T(S) \to T(S)^{\otimes k}$, i.e., a partial map which, if it succeeds, splits a value into subvalues accordingly.

\subsection{Expressions}
Before we're able to proceed with a semantics for expressions, we first need to make our second and final assumption regarding decidability, here involving pattern matching. We say that $T(S)$ has \emph{decidable pattern matching} if for any left expression $l$ the restriction idempotent $\ridm{\sem{l}^\dagger}$ is decidable.

\begin{assumption}
$T(S)$ has decidable pattern matching.
\end{assumption}

We now turn to the semantics for expressions which, unlike those for left expressions, need to be parametrised by a \emph{program context}. This is necessary since the expressions include (inverse) function calls, the semantics of which will naturally differ according to the definition of the function being invoked. Similar to \cite{glueck-kaarsgaard-yokoyama}, we take a program context of $n$ functions to be a morphism $T(S)^{\oplus n} \to T(S)^{\oplus n}$, and think of it as a sum $\sem{f_1} \oplus \cdots \oplus \sem{f_n}$ of interpretations of constituent functions.

Given such a program context $\xi$, we will use $\xi_i$ as shorthand for $\amalg_i^\dagger \circ \xi \circ \amalg_i$, and since $(\xi^\dagger)_i = (\xi_i)^\dagger$ we may also use $\xi_i^\dagger$ unambiguously. The semantics of expressions in a program context are shown in Figure~\ref{fig:exp_sem}. As can be seen, function calls to the $i^{th}$ function in the program context are handled by evaluating the input before handing it off to the $i^{th}$ component of the program context. The \textbf{let} part of these expressions is handled by first permuting (reflecting the fact that some of the free variables may be used in $l_\mathit{in}$ and others in $e$), and then passing the result to the semantics of $e$ as the contents of a fresh variable. Inverse function invocation (using an \textbf{rlet} binder) is handled analogously, though using the \emph{partial inverse} to the program context, rather than the program context itself. Left expressions are handled by passing them on to the earlier definition in Figure~\ref{fig:leftexp_sem}.

The semantics of \textbf{case} expressions require special attention: As with function invocation, since $l$ may only use some of the free variables, we must first select the ones used by $l$ using the permutation $\varphi$, and then pass the rest on to the bodies of each branch (which, by linearity, are each required to use the remaining variables). Then, after evaluating $l$, for each branch $l_i \to e_i$, we need to ensure that only values that did not match any of the previous branches are fed to this branch. This makes sure that branches are tried in the given order, and is why we must compose with $\ridm{\sem{l_{i-1}}^\dagger}^\perp \circ \cdots \circ \ridm{\sem{l_1}^\dagger}^\perp$ before trying to match using $\sem{l_i}^\dagger$. Should the match succeed, the resulting binding is passed to the semantics of the branch body $e_i$. In this way, each branch of the \textbf{case} expression is constructed as a partial map performing its own pattern matching and evaluation, and the meaning of the entire case expression is simply given by gluing these partial maps together using the join. That the join exists follows by the fact that each branch is made explicitly disjoint from all of the previous ones, by composition with $\ridm{\sem{l_{i-1}}^\dagger}^\perp \circ \cdots \circ \ridm{\sem{l_1}^\dagger}^\perp$ (not unlike Gram-Schmidt orthogonalisation).

\begin{figure}
    \centering
    \includegraphics[width=\textwidth]{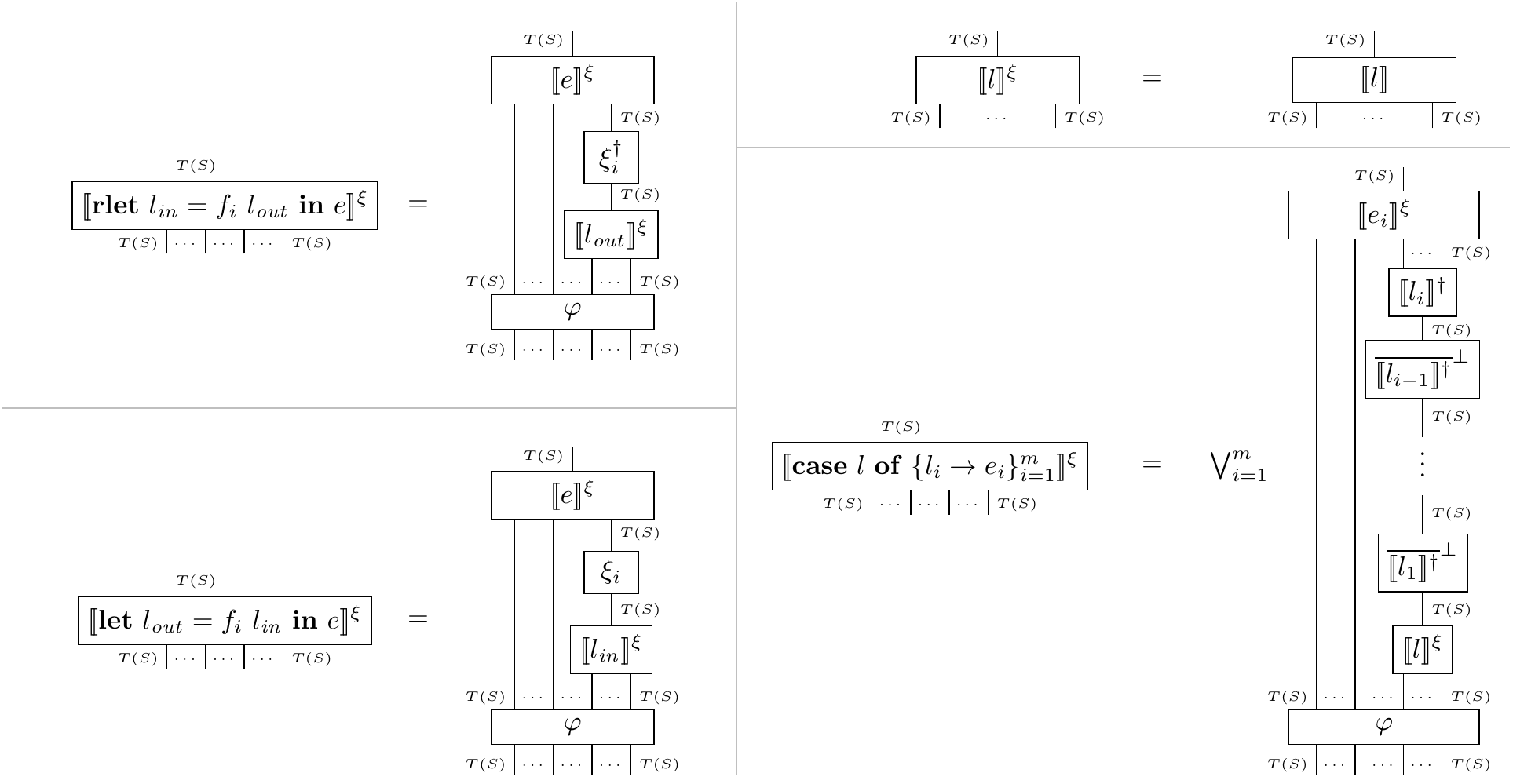}
    \caption{The semantics of expressions in a program context $\xi$.}
    \label{fig:exp_sem}
\end{figure}

\subsection{Programs}
We are finally ready to take on the semantics of function definitions and programs in Rfun. This is comparatively much simpler. Like expressions, the semantics of function definitions is given parametrised by a program context $\xi$. Since we made the simplifying assumption that any function definition is of the form $f\, x \triangleq e$ for a single variable $x$, $e$ alone is an expression with exactly one free variable. As such, we simply define the meaning of a function definition to be given by the semantics of its body,
$$
\sem{f\, x \triangleq e}^\xi = \sem{e}^\xi \enspace.
$$
Finally, a program is simply a list of function definitions, but unlike functions, their semantics should be self-contained and not depend on context. To solve this bootstrapping problem, we use the fixed point operator on continuous morphism schemes $\Hom(T(S)^{\oplus n}, T(S)^{\oplus n}) \to \Hom(T(S)^{\oplus n}, T(S)^{\oplus n})$ given by the canonical enrichment in domains. This, importantly, also allows (systems of mutually) recursive functions to be defined.
$$
\sem{d_1; \dots, d_n} = \fix (\xi \mapsto \sem{d_1}^\xi \oplus \cdots \oplus \sem{d_n}^\xi) \enspace.
$$
This definition requires us to verify that the function $\xi \mapsto \sem{d_1}^\xi \oplus \cdots \oplus \sem{d_n}^\xi$ is always continuous, but this follows straightforwardly from the observation that only continuous operations (in particular vertical and horizontal composition) are ever performed on the program context $\xi$.

\subsection{Other reversible languages}
\label{sub:other-revs}

Join inverse categorie, as a model for reversible computing, also outline models for other reversible languages than Rfun. We have already noted in Section~\ref{sec:bimonoidal} that Theorem~\ref{thm:bimonoidal_embedding} leads to an interpretation of recursive types in Theseus~\cite{theseus}. Noting that Theseus is built on the reversible combinator calculus $\Pi^0$~\cite{james-sabry-infeff}, and that it can be straightforwardly shown that join inverse rig categories are algebraically compact over join restriction functors and that they are examples of $\dagger$-traced $\omega$-continuous rig categories (in the sense of Karvonen \cite{karvonen}), it follows that join inverse rig categories are models of Theseus. The fact that join inverse rig categories are $\dagger$-traced was established in \cite{kaarsgaard-axelsen-glueck-jlamp}:

\begin{proposition}
Every join inverse category with a join-preserving disjointness tensor (specifically any join inverse rig category) has a uniform trace operator $$\mathrm{Tr}_{A,B}^U : \Hom(A \oplus U, B \oplus U) \to \Hom(A,B)$$
which satisfies $\mathrm{Tr}_{A,B}^U(f)^\dagger = \mathrm{Tr}_{B,A}^U(f^\dagger)$.
\end{proposition}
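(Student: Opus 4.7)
The plan is to define the trace operator via a Girard-style particle-style formula, expressing $\mathrm{Tr}_{A,B}^U(f)$ as an infinite join over the number of times a ``particle'' loops through $U$. Given $f : A \oplus U \to B \oplus U$, decompose it along the coproduct injections into four components
\[
f_{AB} = \amalg_1^\dagger \circ f \circ \amalg_1, \quad f_{AU} = \amalg_2^\dagger \circ f \circ \amalg_1, \quad f_{UB} = \amalg_1^\dagger \circ f \circ \amalg_2, \quad f_{UU} = \amalg_2^\dagger \circ f \circ \amalg_2,
\]
using the fact that in any inverse category with disjointness tensor the injections $\amalg_i$ have partial inverses $\amalg_i^\dagger$ (and these compose with the injections to give zero whenever the indices differ). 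Then set
\[
\mathrm{Tr}_{A,B}^U(f) \;=\; f_{AB} \,\vee\, \bigvee_{n \ge 0} f_{UB} \circ f_{UU}^n \circ f_{AU}.
\]

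The first key step is to verify that the family on the right-hand side is inverse compatible, so that the join exists by the definition of a join inverse category. The summands are in fact pairwise disjoint: the restriction idempotent of $f_{UB} \circ f_{UU}^n \circ f_{AU}$ lives below $\ridm{f_{AU}}$ and restricts to the subset of $A$ whose particle trip requires exactly $n+1$ entries into $U$ before exiting, and these events are disjoint because $\amalg_1^\dagger$ and $\amalg_2^\dagger$ are themselves disjoint on the image of $f$ (so any two distinct iterations differ in where the particle exits the loop). This disjointness, combined with the fact that joins of disjoint morphisms are always inverse compatible, makes the definition sound.

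Next I would verify the trace axioms (vanishing, superposing, yanking, and the two sliding laws) in the order given by Joyal--Street--Verity. Each reduces, after expanding one side by the particle sum, to an identity between joins that can be checked termwise using join-preservation of composition (Definition~\ref{def:join_inv_cat} (iii)--(iv)) and join-preservation of $\oplus$. Yanking in particular falls out because for the symmetry $\gamma_\oplus : U \oplus U \to U \oplus U$ only the $n=0$ term in the particle sum is nonzero, and that term is $\id_U$. Uniformity is handled by the standard termwise argument: given a morphism $h : U \to U'$ intertwining $f$ and $g$ in the appropriate sense, the components of $f$ and $g$ are linked by $h$, and $h$ slides through the join by join-preservation.

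The final and most conceptually clean step is the $\dagger$-property. Partial inversion commutes with the componentwise decomposition (since $\amalg_i^\dagger{}^\dagger = \amalg_i$) to give $(f^\dagger)_{BA} = (f_{AB})^\dagger$, $(f^\dagger)_{UA} = (f_{AU})^\dagger$, $(f^\dagger)_{BU} = (f_{UB})^\dagger$, and $(f^\dagger)_{UU} = (f_{UU})^\dagger$. Since partial inversion is an involutive contravariant functor that preserves joins (this is part of the canonical domain enrichment mentioned in the excerpt just before the examples), applying $(-)^\dagger$ to the defining join for $\mathrm{Tr}_{A,B}^U(f)$ turns each term $f_{UB} \circ f_{UU}^n \circ f_{AU}$ into $(f_{AU})^\dagger \circ ((f_{UU})^\dagger)^n \circ (f_{UB})^\dagger$, which is precisely the $n$-th term in the particle sum for $\mathrm{Tr}_{B,A}^U(f^\dagger)$. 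I expect the only delicate point here to be managing the associativity and symmetry coherences of $\oplus$, which is just bookkeeping, so the main obstacle is really the inverse-compatibility argument at the first step.
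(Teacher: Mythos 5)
Your construction is correct and is essentially the proof on record: the paper itself gives no argument for this proposition but defers to \cite{kaarsgaard-axelsen-glueck-jlamp}, where the trace is built by exactly your particle-style formula $f_{AB} \vee \bigvee_{n\ge 0} f_{UB}\circ f_{UU}^{n}\circ f_{AU}$, with the join existing because the summands are pairwise disjoint (hence inverse compatible) and the $\dagger$-property following by applying the join-preserving, involutive partial inversion termwise. No substantive differences to report.
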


Join inverse rig categories also turn out to constitute a model for structured reversible flowcharts, as studied in~\cite{revflowcharts}. The two crucial elements in the interpretation of reversible flowcharts in inverse categories~\cite{revflowcharts} are: (inverse) \emph{extensivity}, which holds for any join inverse category with a join-preserving disjointness tensor (specifically any join inverse rig category) and gives semantics to \emph{reversible control flow}; and the presence of the $\dagger$-trace constructed previously, which describes \emph{reversible tail recursion}.

\section{Concluding remarks}
\label{sec:categorical-models}
In summary, we have introduced join inverse categories and constructed the categorical semantics of the expressions of the language Rfun. We have also argued that our categorical framework fits neatly in other reversible languages (Theseus, reversible flowcharts). With arguably weak categorical assumptions, we showcase the strengths of join inverse category theory in the study of the semantics of reversible programming. 


Rig categories and groupoids have previously been considered in connection with reversible computing. Notably, the $\Pi$ family~\cite{bowman-james-sabry-dagger,james-sabry-infeff} of reversible programming languages, as well as the language CoreFun~\cite{jacobsen-kaarsgaard-thomsen-corefun}, are essentially term languages for dagger rig categories. Many extensions of $\Pi$ have since been considered (see, e.g., \cite{carette-sabry-weakriggrpds,chen-sabry-negative-fractional,heunen-kaarsgaard-qie,kaarsgaard-veltri-engarde}). The notion of a \emph{distributive} inverse category~\cite{giles-thesis} is also strongly related to our approach.

As future work, we consider the categorical treatment of languages for reversible circuits, for example in the context of abstract embedded circuit-description languages such as EWire~\cite{ewire}. Such a study would be particularly relevant in the context of the development of verification and optimisation tools for Field-Programmable Gate Array (FPGA) circuits, but also quantum circuits.

\bibliographystyle{eptcs}
\bibliography{revcat}

\begin{thebibliography}{10}
\providecommand{\bibitemdeclare}[2]{}
\providecommand{\surnamestart}{}
\providecommand{\surnameend}{}
\providecommand{\urlprefix}{Available at }
\providecommand{\url}[1]{\texttt{#1}}
\providecommand{\href}[2]{\texttt{#2}}
\providecommand{\urlalt}[2]{\href{#1}{#2}}
\providecommand{\doi}[1]{doi:\urlalt{http://dx.doi.org/#1}{#1}}
\providecommand{\eprint}[1]{arXiv:\urlalt{https://arxiv.org/abs/#1}{#1}}
\providecommand{\bibinfo}[2]{#2}

\bibitemdeclare{inproceedings}{abramsky-horsman}
\bibitem{abramsky-horsman}
\bibinfo{author}{Samson \surnamestart Abramsky\surnameend} \&
  \bibinfo{author}{Dominic \surnamestart Horsman\surnameend}
  (\bibinfo{year}{2015}): \emph{\bibinfo{title}{{DEMONIC} programming: a
  computational language for single-particle equilibrium thermodynamics, and
  its formal semantics}}.
\newblock In \bibinfo{editor}{Chris \surnamestart Heunen\surnameend},
  \bibinfo{editor}{Peter \surnamestart Selinger\surnameend} \&
  \bibinfo{editor}{Jamie \surnamestart Vicary\surnameend}, editors: {\sl
  \bibinfo{booktitle}{Proceedings 12th International Workshop on Quantum
  Physics and Logic}}, pp. \bibinfo{pages}{1--16}, \doi{10.4204/EPTCS.195.1}.

\bibitemdeclare{inproceedings}{altenkirch-grattage}
\bibitem{altenkirch-grattage}
\bibinfo{author}{Thorsten \surnamestart Altenkirch\surnameend} \&
  \bibinfo{author}{Jonathan \surnamestart Grattage\surnameend}
  (\bibinfo{year}{2005}): \emph{\bibinfo{title}{A Functional Quantum
  Programming Language}}.
\newblock In: {\sl \bibinfo{booktitle}{20th {IEEE} Symposium on Logic in
  Computer Science {(LICS} 2005), 26-29 June 2005, Chicago, IL, USA,
  Proceedings}}, pp. \bibinfo{pages}{249--258}, \doi{10.1109/LICS.2005.1}.

\bibitemdeclare{inproceedings}{aman-foundations}
\bibitem{aman-foundations}
\bibinfo{author}{Bogdan \surnamestart Aman\surnameend},
  \bibinfo{author}{Gabriel \surnamestart Ciobanu\surnameend},
  \bibinfo{author}{Robert \surnamestart Gl{\"u}ck\surnameend},
  \bibinfo{author}{Robin \surnamestart Kaarsgaard\surnameend},
  \bibinfo{author}{Jarkko \surnamestart Kari\surnameend},
  \bibinfo{author}{Martin \surnamestart Kutrib\surnameend},
  \bibinfo{author}{Ivan \surnamestart Lanese\surnameend},
  \bibinfo{author}{Claudio~Antares \surnamestart Mezzina\surnameend},
  \bibinfo{author}{{\L}ukasz \surnamestart Mikulski\surnameend},
  \bibinfo{author}{Rajagopal \surnamestart Nagarajan\surnameend} et~al.
  (\bibinfo{year}{2020}): \emph{\bibinfo{title}{Foundations of reversible
  computation}}.
\newblock In \bibinfo{editor}{Irek \surnamestart Ulidowski\surnameend},
  \bibinfo{editor}{Ivan \surnamestart Lanese\surnameend},
  \bibinfo{editor}{Ulrik~Pagh \surnamestart Schultz\surnameend} \&
  \bibinfo{editor}{Carla \surnamestart Ferreira\surnameend}, editors: {\sl
  \bibinfo{booktitle}{Reversible Computation: Extending Horizons of
  Computing}}, \bibinfo{organization}{Springer}, pp. \bibinfo{pages}{1--40},
  \doi{10.1016/j.tcs.2015.07.046}.

\bibitemdeclare{inproceedings}{axelsen}
\bibitem{axelsen}
\bibinfo{author}{Holger~Bock \surnamestart Axelsen\surnameend} \&
  \bibinfo{author}{Robert \surnamestart Gl\"{u}ck\surnameend}
  (\bibinfo{year}{2011}): \emph{\bibinfo{title}{What do reversible programs
  compute?}}
\newblock In \bibinfo{editor}{Martin \surnamestart Hofmann\surnameend}, editor:
  {\sl \bibinfo{booktitle}{FoSSaCS 2011}}, {\sl \bibinfo{series}{LNCS}}
  \bibinfo{volume}{6604}, \bibinfo{publisher}{Springer}, pp.
  \bibinfo{pages}{42--56}, \doi{10.1007/978-3-642-19805-2}.

\bibitemdeclare{inproceedings}{axelsen-kaarsgaard-fossacs16}
\bibitem{axelsen-kaarsgaard-fossacs16}
\bibinfo{author}{Holger~Bock \surnamestart Axelsen\surnameend} \&
  \bibinfo{author}{Robin \surnamestart Kaarsgaard\surnameend}
  (\bibinfo{year}{2016}): \emph{\bibinfo{title}{Join Inverse Categories as
  Models of Reversible Recursion}}.
\newblock In: {\sl \bibinfo{booktitle}{Foundations of Software Science and
  Computation Structures 2016}}, {\sl \bibinfo{series}{LNCS}}
  \bibinfo{volume}{9634}, \bibinfo{publisher}{Springer}, pp.
  \bibinfo{pages}{73--90}, \doi{10.1007/978-3-642-29517-1}.

\bibitemdeclare{article}{barr}
\bibitem{barr}
\bibinfo{author}{Michael \surnamestart Barr\surnameend} (\bibinfo{year}{1992}):
  \emph{\bibinfo{title}{Algebraically compact functors}}.
\newblock {\sl \bibinfo{journal}{Journal of Pure and Applied Algebra}}
  \bibinfo{volume}{82}(\bibinfo{number}{3}), pp. \bibinfo{pages}{211--231},
  \doi{10.1016/0022-4049(92)90169-G}.

\bibitemdeclare{article}{bennett}
\bibitem{bennett}
\bibinfo{author}{Charles~H. \surnamestart Bennett\surnameend}
  (\bibinfo{year}{1973}): \emph{\bibinfo{title}{Logical reversibility of
  computation}}.
\newblock {\sl \bibinfo{journal}{IBM Journal of Research and Development}}
  \bibinfo{volume}{17}(\bibinfo{number}{6}), pp. \bibinfo{pages}{525--532},
  \doi{10.1147/rd.176.0525}.

\bibitemdeclare{article}{berut}
\bibitem{berut}
\bibinfo{author}{Antoine \surnamestart B{\'e}rut\surnameend},
  \bibinfo{author}{Artak \surnamestart Arakelyan\surnameend},
  \bibinfo{author}{Artyom \surnamestart Petrosyan\surnameend},
  \bibinfo{author}{Sergio \surnamestart Ciliberto\surnameend},
  \bibinfo{author}{Raoul \surnamestart Dillenschneider\surnameend} \&
  \bibinfo{author}{Eric \surnamestart Lutz\surnameend} (\bibinfo{year}{2012}):
  \emph{\bibinfo{title}{Experimental verification of {L}andauer's principle
  linking information and thermodynamics}}.
\newblock {\sl \bibinfo{journal}{Nature}}
  \bibinfo{volume}{483}(\bibinfo{number}{7388}), pp. \bibinfo{pages}{187--189},
  \doi{10.1143/JPSJ.66.3326}.

\bibitemdeclare{unpublished}{bowman-james-sabry-dagger}
\bibitem{bowman-james-sabry-dagger}
\bibinfo{author}{William~J. \surnamestart Bowman\surnameend},
  \bibinfo{author}{Roshan~P. \surnamestart James\surnameend} \&
  \bibinfo{author}{Amr \surnamestart Sabry\surnameend} (\bibinfo{year}{2011}):
  \emph{\bibinfo{title}{Dagger traced symmetric monoidal categories and
  reversible programming}}.
\newblock \bibinfo{note}{Work-in-progress report presented at the 3rd
  International Workshop on Reversible Computation}.

\bibitemdeclare{inproceedings}{carette-sabry-weakriggrpds}
\bibitem{carette-sabry-weakriggrpds}
\bibinfo{author}{Jacques \surnamestart Carette\surnameend} \&
  \bibinfo{author}{Amr \surnamestart Sabry\surnameend} (\bibinfo{year}{2016}):
  \emph{\bibinfo{title}{Computing with semirings and weak rig groupoids}}.
\newblock In: {\sl \bibinfo{booktitle}{Proceedings of the 25th European
  Symposium on Programming (ESOP 2016)}}, \bibinfo{publisher}{Springer}, pp.
  \bibinfo{pages}{123--148}, \doi{10.1007/978-3-662-49498-1}.

\bibitemdeclare{article}{chen-sabry-negative-fractional}
\bibitem{chen-sabry-negative-fractional}
\bibinfo{author}{Chao-Hong \surnamestart Chen\surnameend} \&
  \bibinfo{author}{Amr \surnamestart Sabry\surnameend} (\bibinfo{year}{2021}):
  \emph{\bibinfo{title}{A Computational Interpretation of Compact Closed
  Categories: Reversible Programming with Negative and Fractional Types}}.
\newblock {\sl \bibinfo{journal}{Proc. ACM Program. Lang.}}
  \bibinfo{volume}{5}(\bibinfo{number}{POPL}), \doi{10.1016/j.tcs.2015.07.046}.

\bibitemdeclare{article}{cockett1}
\bibitem{cockett1}
\bibinfo{author}{J.~R.~B. \surnamestart Cockett\surnameend} \&
  \bibinfo{author}{Stephen \surnamestart Lack\surnameend}
  (\bibinfo{year}{2002}): \emph{\bibinfo{title}{{Restriction categories {I}:
  {C}ategories of partial maps}}}.
\newblock {\sl \bibinfo{journal}{Theoretical Computer Science}}
  \bibinfo{volume}{270}(\bibinfo{number}{1--2}), pp. \bibinfo{pages}{223--259},
  \doi{10.1016/S0304-3975(00)00382-0}.

\bibitemdeclare{article}{cockett2}
\bibitem{cockett2}
\bibinfo{author}{J.~Robin~B. \surnamestart Cockett\surnameend} \&
  \bibinfo{author}{Stephen \surnamestart Lack\surnameend}
  (\bibinfo{year}{2003}): \emph{\bibinfo{title}{Restriction categories {II}:
  {P}artial map classification}}.
\newblock {\sl \bibinfo{journal}{Theoretical Computer Science}}
  \bibinfo{volume}{294}(\bibinfo{number}{1}), pp. \bibinfo{pages}{61--102},
  \doi{10.1016/S0304-3975(01)00245-6}.

\bibitemdeclare{article}{cockett-garner}
\bibitem{cockett-garner}
\bibinfo{author}{Robin \surnamestart Cockett\surnameend} \&
  \bibinfo{author}{Richard \surnamestart Garner\surnameend}
  (\bibinfo{year}{2014}): \emph{\bibinfo{title}{Restriction categories as
  enriched categories}}.
\newblock {\sl \bibinfo{journal}{Theoretical Computer Science}}
  \bibinfo{volume}{523}, pp. \bibinfo{pages}{37--55},
  \doi{10.1016/j.tcs.2013.12.018}.

\bibitemdeclare{article}{cockett3}
\bibitem{cockett3}
\bibinfo{author}{Robin \surnamestart Cockett\surnameend} \&
  \bibinfo{author}{Stephen \surnamestart Lack\surnameend}
  (\bibinfo{year}{2007}): \emph{\bibinfo{title}{Restriction categories {III}:
  {C}olimits, partial limits and extensivity}}.
\newblock {\sl \bibinfo{journal}{Mathematical Structures in Computer Science}}
  \bibinfo{volume}{17}(\bibinfo{number}{04}), pp. \bibinfo{pages}{775--817},
  \doi{10.1016/S1571-0661(05)80303-2}.

\bibitemdeclare{inproceedings}{cristescu-krivine-varacca}
\bibitem{cristescu-krivine-varacca}
\bibinfo{author}{Ioana \surnamestart Cristescu\surnameend},
  \bibinfo{author}{Jean \surnamestart Krivine\surnameend} \&
  \bibinfo{author}{Daniele \surnamestart Varacca\surnameend}
  (\bibinfo{year}{2013}): \emph{\bibinfo{title}{A compositional semantics for
  the reversible p-calculus}}.
\newblock In: {\sl \bibinfo{booktitle}{Logic in Computer Science (LICS), 2013
  28th Annual IEEE/ACM Symposium on}}, \bibinfo{organization}{IEEE}, pp.
  \bibinfo{pages}{388--397}, \doi{10.1109/LICS.2013.45}.

\bibitemdeclare{article}{fredkin}
\bibitem{fredkin}
\bibinfo{author}{Edward \surnamestart Fredkin\surnameend} \&
  \bibinfo{author}{Tommaso \surnamestart Toffoli\surnameend}
  (\bibinfo{year}{1982}): \emph{\bibinfo{title}{Conservative logic}}.
\newblock {\sl \bibinfo{journal}{International Journal of Theoretical Physics}}
  \bibinfo{volume}{21}(\bibinfo{number}{3-4}), pp. \bibinfo{pages}{219--253},
  \doi{10.1007/BF01857727}.

\bibitemdeclare{phdthesis}{giles-thesis}
\bibitem{giles-thesis}
\bibinfo{author}{Brett~Gordon \surnamestart Giles\surnameend}
  (\bibinfo{year}{2014}): \emph{\bibinfo{title}{{An investigation of some
  theoretical aspects of reversible computing}}}.
\newblock Ph.D. thesis, \bibinfo{school}{University of Calgary}.

\bibitemdeclare{article}{revflowcharts}
\bibitem{revflowcharts}
\bibinfo{author}{Robert \surnamestart Gl{\"u}ck\surnameend} \&
  \bibinfo{author}{Robin \surnamestart Kaarsgaard\surnameend}
  (\bibinfo{year}{2018}): \emph{\bibinfo{title}{{A categorical foundation for
  structured reversible flowchart languages: Soundness and adequacy}}}.
\newblock {\sl \bibinfo{journal}{{Logical Methods in Computer Science}}}
  \bibinfo{volume}{{Volume 14, Issue 3}}, \doi{10.23638/LMCS-14(3:16)2018}.

\bibitemdeclare{inproceedings}{glueck-kaarsgaard-yokoyama}
\bibitem{glueck-kaarsgaard-yokoyama}
\bibinfo{author}{Robert \surnamestart Gl{\"u}ck\surnameend},
  \bibinfo{author}{Robin \surnamestart Kaarsgaard\surnameend} \&
  \bibinfo{author}{Tetsuo \surnamestart Yokoyama\surnameend}
  (\bibinfo{year}{2019}): \emph{\bibinfo{title}{Reversible Programs Have
  Reversible Semantics}}.
\newblock In: {\sl \bibinfo{booktitle}{Formal Methods. FM 2019 International
  Workshops}}, {\sl \bibinfo{series}{Lecture Notes in Computer Science}}
  \bibinfo{volume}{12233}, pp. \bibinfo{pages}{413--427},
  \doi{10.1016/j.tcs.2015.07.046}.

\bibitemdeclare{phdthesis}{guo-thesis}
\bibitem{guo-thesis}
\bibinfo{author}{Xiuzhan \surnamestart Guo\surnameend} (\bibinfo{year}{2012}):
  \emph{\bibinfo{title}{{Products, Joins, Meets, and Ranges in Restriction
  Categories}}}.
\newblock Ph.D. thesis, \bibinfo{school}{University of Calgary}.

\bibitemdeclare{incollection}{heunen}
\bibitem{heunen}
\bibinfo{author}{Chris \surnamestart Heunen\surnameend} (\bibinfo{year}{2013}):
  \emph{\bibinfo{title}{On the functor $\ell^2$}}.
\newblock In: {\sl \bibinfo{booktitle}{Computation, Logic, Games, and Quantum
  Foundations. The Many Facets of Samson Abramsky}},
  \bibinfo{publisher}{Springer}, pp. \bibinfo{pages}{107--121},
  \doi{10.1016/S0022-4049(02)00141-X}.

\bibitemdeclare{unpublished}{heunen-kaarsgaard-qie}
\bibitem{heunen-kaarsgaard-qie}
\bibinfo{author}{Chris \surnamestart Heunen\surnameend} \&
  \bibinfo{author}{Robin \surnamestart Kaarsgaard\surnameend}
  (\bibinfo{year}{2021}): \emph{\bibinfo{title}{Quantum Information Effects}}.
\newblock \bibinfo{note}{ArXiv:2107.12144}.

\bibitemdeclare{phdthesis}{hines-thesis}
\bibitem{hines-thesis}
\bibinfo{author}{Peter~Mark \surnamestart Hines\surnameend}
  (\bibinfo{year}{1998}): \emph{\bibinfo{title}{The Algebra of
  Self-{S}imilarity and its Applications}}.
\newblock Ph.D. thesis, \bibinfo{school}{University of Wales, Bangor}.

\bibitemdeclare{inproceedings}{jacobsen-kaarsgaard-thomsen-corefun}
\bibitem{jacobsen-kaarsgaard-thomsen-corefun}
\bibinfo{author}{Petur Andrias~H{\o}jgaard \surnamestart Jacobsen\surnameend},
  \bibinfo{author}{Robin \surnamestart Kaarsgaard\surnameend} \&
  \bibinfo{author}{Michael~Kirkedal \surnamestart Thomsen\surnameend}
  (\bibinfo{year}{2018}): \emph{\bibinfo{title}{{CoreFun}: A Typed Functional
  Reversible Core Language}}.
\newblock In: {\sl \bibinfo{booktitle}{International Conference on Reversible
  Computation}}, \bibinfo{organization}{Springer}, pp.
  \bibinfo{pages}{304--321}, \doi{10.1016/j.tcs.2015.07.046}.

\bibitemdeclare{inproceedings}{james-sabry-infeff}
\bibitem{james-sabry-infeff}
\bibinfo{author}{Roshan~P. \surnamestart James\surnameend} \&
  \bibinfo{author}{Amr \surnamestart Sabry\surnameend} (\bibinfo{year}{2012}):
  \emph{\bibinfo{title}{Information Effects}}.
\newblock In: {\sl \bibinfo{booktitle}{Principles of Programming Languages
  2012, Proceedings}}, \bibinfo{publisher}{ACM}, pp. \bibinfo{pages}{73--84},
  \doi{10.1145/2103656.2103667}.

\bibitemdeclare{misc}{theseus}
\bibitem{theseus}
\bibinfo{author}{Roshan~P. \surnamestart James\surnameend} \&
  \bibinfo{author}{Amr \surnamestart Sabry\surnameend} (\bibinfo{year}{2014}):
  \emph{\bibinfo{title}{Theseus: {A} High Level Language for Reversible
  Computing}}.
\newblock \bibinfo{note}{Reversible Computing 2014}.

\bibitemdeclare{article}{kaarsgaard-axelsen-glueck-jlamp}
\bibitem{kaarsgaard-axelsen-glueck-jlamp}
\bibinfo{author}{Robin \surnamestart Kaarsgaard\surnameend},
  \bibinfo{author}{Holger~Bock \surnamestart Axelsen\surnameend} \&
  \bibinfo{author}{Robert \surnamestart Glück\surnameend}
  (\bibinfo{year}{2017}): \emph{\bibinfo{title}{Join inverse categories and
  reversible recursion}}.
\newblock {\sl \bibinfo{journal}{Journal of Logical and Algebraic Methods in
  Programming}} \bibinfo{volume}{87}, pp. \bibinfo{pages}{33--50},
  \doi{10.1016/j.jlamp.2016.08.003}.

\bibitemdeclare{inproceedings}{kaarsgaard-veltri-engarde}
\bibitem{kaarsgaard-veltri-engarde}
\bibinfo{author}{Robin \surnamestart Kaarsgaard\surnameend} \&
  \bibinfo{author}{Niccol{\`o} \surnamestart Veltri\surnameend}
  (\bibinfo{year}{2019}): \emph{\bibinfo{title}{En Garde! Unguarded Iteration
  for Reversible Computation in the Delay Monad}}.
\newblock In: {\sl \bibinfo{booktitle}{Proceedings of the 13th International
  Conference on Mathematics of Program Construction (MPC 2019)}},
  \bibinfo{publisher}{Springer}, pp. \bibinfo{pages}{366--384},
  \doi{10.1007/978-3-030-33636-3}.

\bibitemdeclare{phdthesis}{karvonen}
\bibitem{karvonen}
\bibinfo{author}{Martti \surnamestart Karvonen\surnameend}
  (\bibinfo{year}{2019}): \emph{\bibinfo{title}{The Way of the Dagger}}.
\newblock Ph.D. thesis, \bibinfo{school}{University of Edinburgh}.

\bibitemdeclare{incollection}{kastl}
\bibitem{kastl}
\bibinfo{author}{J.~\surnamestart Kastl\surnameend} (\bibinfo{year}{1979}):
  \emph{\bibinfo{title}{Inverse categories}}.
\newblock In \bibinfo{editor}{Hans-J\"{u}rgen \surnamestart
  Hoehnke\surnameend}, editor: {\sl \bibinfo{booktitle}{Algebraische Modelle,
  Kategorien und Gruppoide}}, \bibinfo{publisher}{Akademie Verlag, Berlin}, pp.
  \bibinfo{pages}{51--60}.

\bibitemdeclare{inproceedings}{kutrib2}
\bibitem{kutrib2}
\bibinfo{author}{Martin \surnamestart Kutrib\surnameend} \&
  \bibinfo{author}{Andreas \surnamestart Malcher\surnameend}
  (\bibinfo{year}{2010}): \emph{\bibinfo{title}{Reversible pushdown automata}}.
\newblock In \bibinfo{editor}{A.-H. \surnamestart Dediu\surnameend},
  \bibinfo{editor}{H.~\surnamestart Fernau\surnameend} \&
  \bibinfo{editor}{C.~\surnamestart Mart{\' i}n-Vide\surnameend}, editors: {\sl
  \bibinfo{booktitle}{LATA 2010}}, {\sl \bibinfo{series}{LNCS}}
  \bibinfo{volume}{6031}, \bibinfo{publisher}{Springer-Verlag}, pp.
  \bibinfo{pages}{368--379}, \doi{10.1007/978-3-642-13089-2}.

\bibitemdeclare{inproceedings}{kutrib1}
\bibitem{kutrib1}
\bibinfo{author}{Martin \surnamestart Kutrib\surnameend} \&
  \bibinfo{author}{Matthias \surnamestart Wendlandt\surnameend}
  (\bibinfo{year}{2015}): \emph{\bibinfo{title}{Reversible limited automata}}.
\newblock In \bibinfo{editor}{J.~\surnamestart {Durand-Lose}\surnameend} \&
  \bibinfo{editor}{B.~\surnamestart Nagy\surnameend}, editors: {\sl
  \bibinfo{booktitle}{MCU 2015}}, {\sl \bibinfo{series}{LNCS}}
  \bibinfo{volume}{9288}, \bibinfo{publisher}{Springer}, pp.
  \bibinfo{pages}{113--128}, \doi{10.1007/978-3-319-23111-2}.

\bibitemdeclare{article}{landauer}
\bibitem{landauer}
\bibinfo{author}{Rolf \surnamestart Landauer\surnameend}
  (\bibinfo{year}{1961}): \emph{\bibinfo{title}{Irreversibility and heat
  generation in the computing process}}.
\newblock {\sl \bibinfo{journal}{IBM journal of research and development}}
  \bibinfo{volume}{5}(\bibinfo{number}{3}), pp. \bibinfo{pages}{183--191},
  \doi{10.1147/rd.53.0183}.

\bibitemdeclare{book}{lawson}
\bibitem{lawson}
\bibinfo{author}{Mark~V \surnamestart Lawson\surnameend}
  (\bibinfo{year}{1998}): \emph{\bibinfo{title}{Inverse Semigroups: The Theory
  of Partial Symmetries}}.
\newblock \bibinfo{publisher}{World Scientific}, \doi{10.1142/3645}.

\bibitemdeclare{incollection}{malherbe-scott-selinger}
\bibitem{malherbe-scott-selinger}
\bibinfo{author}{Octavio \surnamestart Malherbe\surnameend},
  \bibinfo{author}{Philip \surnamestart Scott\surnameend} \&
  \bibinfo{author}{Peter \surnamestart Selinger\surnameend}
  (\bibinfo{year}{2013}): \emph{\bibinfo{title}{Presheaf models of quantum
  computation: an outline}}.
\newblock In: {\sl \bibinfo{booktitle}{Computation, Logic, Games, and Quantum
  Foundations. The Many Facets of Samson Abramsky}},
  \bibinfo{publisher}{Springer}, pp. \bibinfo{pages}{178--194},
  \doi{10.1007/978-3-540-78499-9}.

\bibitemdeclare{article}{moore}
\bibitem{moore}
\bibinfo{author}{Gordon~E \surnamestart Moore\surnameend}
  (\bibinfo{year}{2006}): \emph{\bibinfo{title}{Cramming more components onto
  integrated circuits, Reprinted from Electronics, volume 38, number 8, April
  19, 1965, pp. 114 ff.}}
\newblock {\sl \bibinfo{journal}{IEEE Solid-State Circuits Newsletter}}
  \bibinfo{volume}{3}(\bibinfo{number}{20}), pp. \bibinfo{pages}{33--35},
  \doi{10.1109/N-SSC.2006.4785860}.

\bibitemdeclare{inproceedings}{rennela-staton-mfps31}
\bibitem{rennela-staton-mfps31}
\bibinfo{author}{Mathys \surnamestart Rennela\surnameend} \&
  \bibinfo{author}{Sam \surnamestart Staton\surnameend} (\bibinfo{year}{2015}):
  \emph{\bibinfo{title}{{Complete positivity and natural representation of
  quantum computations}}}.
\newblock In: {\sl \bibinfo{booktitle}{MFPS XXXI}}, \bibinfo{volume}{319},
  \bibinfo{publisher}{Electronic Notes in Theoretical Computer Science}, pp.
  \bibinfo{pages}{369--385}, \doi{10.1016/j.entcs.2015.12.022}.

\bibitemdeclare{article}{ewire}
\bibitem{ewire}
\bibinfo{author}{Mathys \surnamestart Rennela\surnameend} \&
  \bibinfo{author}{Sam \surnamestart Staton\surnameend} (\bibinfo{year}{2018}):
  \emph{\bibinfo{title}{Classical control and quantum circuits in enriched
  category theory}}.
\newblock {\sl \bibinfo{journal}{Electronic Notes in Theoretical Computer
  Science}} \bibinfo{volume}{336}, pp. \bibinfo{pages}{257--279},
  \doi{10.1016/j.entcs.2018.03.027}.

\bibitemdeclare{inproceedings}{rennela-staton-furber-qpl16}
\bibitem{rennela-staton-furber-qpl16}
\bibinfo{author}{Mathys \surnamestart Rennela\surnameend}, \bibinfo{author}{Sam
  \surnamestart Staton\surnameend} \& \bibinfo{author}{Robert \surnamestart
  Furber\surnameend} (\bibinfo{year}{2017}):
  \emph{\bibinfo{title}{Infinite-Dimensionality in Quantum Foundations:
  W*-algebras as Presheaves over Matrix Algebras}}.
\newblock In: {\sl \bibinfo{booktitle}{QPL'16}}, {\sl
  \bibinfo{series}{Electronic Proceedings in Theoretical Computer Science}}
  \bibinfo{volume}{236}, \bibinfo{publisher}{Open Publishing Association}, pp.
  \bibinfo{pages}{161--173}, \doi{10.4204/EPTCS.236.11}.

\bibitemdeclare{inproceedings}{schordan}
\bibitem{schordan}
\bibinfo{author}{Markus \surnamestart Schordan\surnameend},
  \bibinfo{author}{David \surnamestart Jefferson\surnameend},
  \bibinfo{author}{Peter \surnamestart Barnes\surnameend},
  \bibinfo{author}{Tomas \surnamestart Oppelstrup\surnameend} \&
  \bibinfo{author}{Daniel \surnamestart Quinlan\surnameend}
  (\bibinfo{year}{2015}): \emph{\bibinfo{title}{Reverse Code Generation for
  Parallel Discrete Event Simulation}}.
\newblock In \bibinfo{editor}{Jean \surnamestart Krivine\surnameend} \&
  \bibinfo{editor}{Jean-Bernard \surnamestart Stefani\surnameend}, editors:
  {\sl \bibinfo{booktitle}{RC 2015}}, {\sl \bibinfo{series}{LNCS}}
  \bibinfo{volume}{9138}, \bibinfo{publisher}{Springer}, pp.
  \bibinfo{pages}{95--110}, \doi{10.1007/978-3-319-20860-2}.

\bibitemdeclare{article}{schultz-bordignon-stoy}
\bibitem{schultz-bordignon-stoy}
\bibinfo{author}{Ulrik \surnamestart Schultz\surnameend},
  \bibinfo{author}{Mirko \surnamestart Bordignon\surnameend} \&
  \bibinfo{author}{Kasper \surnamestart Stoy\surnameend}
  (\bibinfo{year}{2011}): \emph{\bibinfo{title}{Robust and reversible execution
  of self-reconfiguration sequences}}.
\newblock {\sl \bibinfo{journal}{Robotica}}
  \bibinfo{volume}{29}(\bibinfo{number}{01}), pp. \bibinfo{pages}{35--57},
  \doi{10.1145/345910.345920}.

\bibitemdeclare{incollection}{schultz-laursen-ellekilde-axelsen}
\bibitem{schultz-laursen-ellekilde-axelsen}
\bibinfo{author}{Ulrik~Pagh \surnamestart Schultz\surnameend},
  \bibinfo{author}{Johan~Sund \surnamestart Laursen\surnameend},
  \bibinfo{author}{Lars-Peter \surnamestart Ellekilde\surnameend} \&
  \bibinfo{author}{Holger~Bock \surnamestart Axelsen\surnameend}
  (\bibinfo{year}{2015}): \emph{\bibinfo{title}{Towards a Domain-Specific
  Language for Reversible Assembly Sequences}}.
\newblock In: {\sl \bibinfo{booktitle}{Reversible Computation}},
  \bibinfo{publisher}{Springer}, pp. \bibinfo{pages}{111--126},
  \doi{10.1147/rd.456.0807}.

\bibitemdeclare{inproceedings}{toffoli}
\bibitem{toffoli}
\bibinfo{author}{Tommaso \surnamestart Toffoli\surnameend}
  (\bibinfo{year}{1980}): \emph{\bibinfo{title}{{Reversible Computing}}}.
\newblock In: {\sl \bibinfo{booktitle}{Proceedings of the Colloquium on
  Automata, Languages and Programming}}, \bibinfo{publisher}{Springer Verlag},
  pp. \bibinfo{pages}{632--644}, \doi{10.1007/3-540-10003-2}.

\bibitemdeclare{inproceedings}{yokoyama-axelsen-glueck-rc2011}
\bibitem{yokoyama-axelsen-glueck-rc2011}
\bibinfo{author}{Tetsuo \surnamestart Yokoyama\surnameend},
  \bibinfo{author}{Holger~Bock \surnamestart Axelsen\surnameend} \&
  \bibinfo{author}{Robert \surnamestart Gl\"{u}ck\surnameend}
  (\bibinfo{year}{2012}): \emph{\bibinfo{title}{{Towards a reversible
  functional language}}}.
\newblock In \bibinfo{editor}{Alexis \surnamestart {De Vos}\surnameend} \&
  \bibinfo{editor}{Robert \surnamestart Wille\surnameend}, editors: {\sl
  \bibinfo{booktitle}{RC 2011}}, {\sl \bibinfo{series}{LNCS}}
  \bibinfo{volume}{7165}, \bibinfo{publisher}{Springer}, pp.
  \bibinfo{pages}{14--29}, \doi{10.1007/978-3-642-29517-1}.

\bibitemdeclare{inproceedings}{janus}
\bibitem{janus}
\bibinfo{author}{Tetsuo \surnamestart Yokoyama\surnameend} \&
  \bibinfo{author}{Robert \surnamestart Gl\"{u}ck\surnameend}
  (\bibinfo{year}{2007}): \emph{\bibinfo{title}{A Reversible Programming
  Language and Its Invertible Self-interpreter}}.
\newblock In: {\sl \bibinfo{booktitle}{PEPM '07, Proceedings}},
  \bibinfo{publisher}{ACM}, pp. \bibinfo{pages}{144--153},
  \doi{10.1145/1244381.1244404}.

\end{thebibliography}

\end{document}